\documentclass{llncs}

\usepackage{amssymb,amsmath,graphics,graphicx,color}
\newtheorem{observation}{Observation}

\usepackage{graphicx,color}

\usepackage{enumitem}

\newcommand{\dist}{{\rm dist}}

\newcommand{\chord}{{\rm chord}}

\begin{document}

\title{Enumeration and Maximum Number of \\ Minimal Connected Vertex Covers in Graphs\thanks{A preliminary version of this paper appeared as an extended abstract in the proceedings of IWOCA 2015.The research leading to these results has received funding from the Research Council of Norway and the European Research Council under the European Union's Seventh Framework Programme (FP/2007-2013) / ERC Grant Agreement n. 267959.}}

\author{
Petr A. Golovach\inst{1}
\and
Pinar Heggernes\inst{1}
\and
Dieter Kratsch\inst{2}
}

\institute{
Department of Informatics, University of Bergen, Norway,
\texttt{\{petr.golovach,pinar.heggernes\}@ii.uib.no}
\and
Universit\'e de Lorraine, LITA, Metz, France, 
\texttt{dieter.kratsch@univ-lorraine.fr}
}

\pagestyle{plain}
\maketitle

\begin{abstract}
{\sc Connected Vertex Cover} is one of the classical problems of computer science, already mentioned in the monograph of Garey and Johnson \cite{GareyJ79}. Although the optimization and decision variants of finding connected vertex covers of minimum size or weight are well studied, surprisingly there is no work on the enumeration or maximum number of minimal connected vertex covers of a graph. In this paper we show that the maximum number of minimal connected vertex covers of a graph is 
at most $1.8668^n$,  
and these can be enumerated in time $O(1.8668^n)$. For graphs of chordality at most 5, we are able to give a better upper bound, and for chordal graphs and distance-hereditary graphs we are able to give tight bounds on the maximum number of minimal connected vertex covers. 
\end{abstract}

\section{Introduction}
\label{sec:intro}
The maximum number of minimal vertex covers that a graph on $n$ vertices can have is equal to the maximum number of maximal independent sets, which is known to be $3^{n/3}$ by a celebrated result of Moon and Moser~\cite{MoonM65}. This result is easily extended to an algorithm that enumerates all the minimal vertex covers of a graph within a polynomial factor of the given bound. The bound is tight as a disjoint union of $n/3$ triangles has exactly $3^{n/3}$ minimal vertex covers. These results have been extremely useful in many algorithms, e.g., they were used by Lawler \cite{Lawler76} to give an algorithm for graph coloring, which was
the fastest algorithm for this purpose for decades. For special graph classes, better bounds have been obtained, e.g., the tight bound for triangle-free graphs is $2^{n/2}$, given by Hujtera and Tuza \cite{HujteraT93} with combinatorial arguments,  and by  Byskov \cite{Byskov04} algorithmically. Also these results have been useful in several algorithms, e.g., for graph homomorphism \cite{FominHK07}. Although connected vertex covers were defined and studied as early as vertex covers \cite{GareyJ79}, interestingly the maximum number of minimal connected vertex covers in graphs or the enumeration of these have not been given attention.

In this paper, we study exactly these questions, and we give an algorithm for enumerating all minimal connected vertex covers of a graph in time $O(1.8668^n)$. 
We also give an upper bound $1.8668^n$ on the number of such covers a graph can have. 
We provide a lower bound example, which is a graph that has $3^{(n-1)/3}$ minimal connected vertex covers, leaving a gap between these bounds on general graphs. We are able to narrow this gap for graphs of 
chordality at most 5, and almost close the gap for chordal graphs and distance-hereditary graphs. In particular, we show that the maximum number of minimal connected vertex covers in chordal graphs, graphs of chordality at most 5, and distance-hereditary graphs, respectively, is at most $3^{n/3}$,  $1.6181^n$, and $2 \cdot 3^{n/3}$. All our bounds are obtained by enumeration algorithms whose running times correspond to the given bounds up to polynomial factors. 

We would like to emphasize that our motivation for the given bounds and enumeration algorithms is not for fast computation of connected vertex covers of minimum size. In fact, as we will see in the next section, such sets can be computed in time $O(1.7088^n)$ on general graphs. Furthermore, Escoffier et al.~\cite{EGM10} have shown that this problem can be solved in polynomial time on chordal graphs. The problem of computing minimum connected vertex covers is indeed well studied with a large number of published results. These are nicely surveyed in the introduction given by Escoffier et al.~\cite{EGM10}.

Our motivation comes from the background given in the first paragraph, as well as the fact that the study of the maximum number of vertex subsets with given properties is a well established area in combinatorics and graph theory. More recently, exponential time enumeration algorithms for listing such vertex subsets in graphs have become increasingly popular and found many applications \cite{FominK10}. For most of these algorithms, an upper bound on the number of enumerated subsets follows from the running time of the algorithm. Examples of such recent results, both on general graphs and on some graph classes, concern the enumeration and maximum number of minimal dominating sets, minimal feedback vertex sets, minimal subset feedback vertex sets, minimal separators, maximal induced matchings, and potential maximal cliques \cite{BHHSV14,CHHK13,CHHV12,FGPR08,FHKPV14,FV10,GM13,GHKS14}.

\section{Preliminaries}\label{sec:defs}
We consider finite undirected graphs without loops or multiple edges. 
For each of the graph problems considered in this paper, we let $n$ denote the number of vertices and $m$  the number of edges of the input graph.
For a graph $G$ and a subset $U\subseteq V(G)$ of vertices, we write $G[U]$ to denote the subgraph of $G$ induced by $U$. We write $G-U$ to denote $G[V(G)\setminus U]$, and $G-u$ if $U=\{u\}$.
A set $U\subseteq V(G)$ is \emph{connected} if $G[U]$ is a connected graph.
For a vertex $v$, we denote by $N_G(v)$ the \emph{(open) neighborhood} of $v$, i.e., the set of vertices that are adjacent to $v$ in $G$.
The \emph{closed neighborhood} is $N_G[v]=N_G(v)\cup \{v\}$. For a set of vertices $U\subseteq V(G)$, $N_G[U]=\cup_{v\in U}N_G[v]$ and $N_G(U)=N_G[U]\setminus U$.
Two distinct $u,v\in V(G)$ are \emph{false twins} if $N_G(u)=N_G(v)$.
The \emph{distance} $\dist_G(u,v)$ between vertices $u$ and $v$ of $G$ is the number of edges on a shortest $(u,v)$-path.
A path or cycle $P$ is \emph{induced} if it has no \emph{chord}, i.e., there is no edge of $G$ that joins any two vertices of $P$ that are not adjacent in $P$. 
The \emph{chordality},  $\chord(G)$, of a graph $G$ is the length of a longest
induced  cycle in $G$; if $G$ has no cycles, then $\chord(G)=0$.
A set of vertices is an {\it independent set} if there is no edge between any pair of these vertices, and it is a {\it clique} if all possible edges are present between pairs of these vertices. An independent set (clique) is {\it maximal} if no set properly containing it is an independent set (clique).
A set of vertices $S\subset V(G)$ of a connected graph $G$ is a \emph{separator} if $G-S$ is disconnected.
A vertex $v$ is a \emph{cut vertex} of a connected graph $G$ if $\{v\}$ is a separator.
For an edge $uv\in E(G)$, the \emph{contraction} of $uv$ is the operation that replaces $u$ and $v$ by a new vertex adjacent to $(N_G(u)\cup N_G(v))\setminus \{u,v\}$. $G/e$ denotes the graph obtained from $G$ by contracting edge $e$. A graph $G'$ is an \emph{induced minor} of $G$ if $G'$ can be obtained from $G$ by deleting vertices and contracting edges. 

For a non-negative integer $k$, a graph $G$ is \emph{$k$-chordal} if $\chord(G)\leq k$.
A graph is \emph{chordal} if it is $3$-chordal.
A graph is a \emph{split} graph if its vertex set can be partitioned in an independent set and a clique.
A graph is \emph{cobipartite} if its vertex set can be partitioned into two cliques.
A graph $G$ is a \emph{chordal bipartite} graph if $G$ is a bipartite graph and $\chord(G)\leq 4$.
A graph $G$ is \emph{distance-hereditary} if for every connected induced subgraph $H$ of $G$, $\dist_H(u,v)=\dist_G(u,v)$ for $u,v\in V(H)$.  
Each of the above mentioned graph classes can be recognized in polynomial (in most cases linear) time, and they are closed under taking induced subgraphs~\cite{BrandstadtLS99,Golumbic04}. See the monographs by Brandst{\"a}dt et al.~\cite{BrandstadtLS99} and Golumbic \cite{Golumbic04} for more properties and characterizations of these classes and their inclusion relationships.

A set of vertices $U\subseteq V(G)$ is a \emph{vertex cover} of $G$ if for every $uv\in E(G)$, $u\in U$ or $v\in U$. A vertex cover $U$ is \emph{connected} if $U$ is a connected set. A (connected) vertex cover $U$ is \emph{minimal} if no proper subset of $U$ is a (connected) vertex cover. 
Observe that $U$ is a minimal connected vertex cover of $G$ if and only if for
every vertex $u\in U$, either $u$ is a cut vertex of $G[U]$ or there is an edge $ux$ 
of $G$ such that $x\notin U$. 
Hence given a vertex set $U\subseteq V(G)$, it can be decided
in time $O(nm)$ whether $U$ is a minimal connected vertex cover of $G$. 

It is easy to see that $U$ is a (minimal) vertex cover of $G$ if and only if $V(G)\setminus U$ is a (maximal) independent set. The following upper bound for the number of maximal independent sets was obtained by Miller and Muller~\cite{MillerM60} and Moon and Moser~\cite{MoonM65}.

\begin{theorem}[\cite{MillerM60,MoonM65}]\label{thm:vc} 
The number of minimal vertex covers (maximal independent sets) of a graph is at most
$$\begin{cases}
3^{n/3}&\text{if } n\equiv 0~ (\text{\rm mod } 3),\\
4\cdot 3^{(n-4)/3}&\text{if } n\equiv 1~ (\text{\rm mod } 3),\\
2\cdot 3^{(n-2)/3}&\text{if } n\equiv 2~ (\text{\rm mod } 3).
\end{cases}$$
\end{theorem}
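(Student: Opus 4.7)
The plan is to prove the stated bound by strong induction on $n$. For small $n$ (say $n \le 4$) the three cases can be verified by direct inspection. For the inductive step, fix an $n$-vertex graph $G$ and choose a vertex $v$ of minimum degree $\delta$. The key structural observation is that every maximal independent set $I$ of $G$ intersects $N_G[v]$: otherwise $v$ itself could be added to $I$ without destroying independence, contradicting the maximality of $I$.

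Associate to each maximal independent set $I$ some vertex $u(I) \in N_G[v] \cap I$, and observe that for this $u$ the residual set $I \setminus \{u\}$ is a maximal independent set of the induced subgraph $G - N_G[u]$, since any vertex that could be added to it outside $N_G[u]$ could also have been added to $I$ in $G$. Because $v$ has minimum degree, every $u \in N_G[v]$ satisfies $|N_G[u]| \ge \delta + 1$, so writing $f(n)$ for the piecewise bound claimed by the theorem and $m(H)$ for the number of maximal independent sets of $H$, the induction hypothesis gives
\[
m(G) \;\le\; \sum_{u \in N_G[v]} m\bigl(G - N_G[u]\bigr) \;\le\; (\delta+1)\, f(n-\delta-1).
\]

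It remains to verify the arithmetic inequality $(\delta+1)\, f(n - \delta - 1) \le f(n)$ for every $\delta \ge 0$ and every residue of $n$ modulo $3$. The tight case is $\delta = 2$, where equality holds in all three residue classes — this is precisely why the extremal construction is a disjoint union of triangles, and why a $3^{n/3}$ rate cannot be improved by this branching. The cases $\delta \in \{0,1\}$ (isolated or pendant vertex) and $\delta \ge 3$ give strictly weaker recurrences, ultimately because $(d+1)^{1/(d+1)} < 3^{1/3}$ whenever $d \ne 2$. I expect the main work to lie not in the conceptual argument but in this finite case analysis across the nine sub-cases, particularly in tracking how the boundary constants $4$ and $2$ appearing in the residues $n \equiv 1$ and $n \equiv 2$ propagate correctly through the transition $n \mapsto n - \delta - 1$.
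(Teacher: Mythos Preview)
The paper does not give its own proof of this theorem: it is quoted as a classical result of Miller--Muller and Moon--Moser and used as a black box, so there is nothing to compare against on the paper's side. Your outline is essentially the original Moon--Moser argument and is correct: the minimum-degree branching together with the fact that every maximal independent set meets $N_G[v]$ yields $m(G)\le(\delta+1)\,f(n-\delta-1)$, and the piecewise function $f$ indeed satisfies $(d+1)\,f(n-d-1)\le f(n)$ for all $d\ge 0$. One small correction to your commentary: $\delta=2$ is not the only equality case. For $\delta=1$ one has $2f(n-2)=f(n)$ when $n\equiv 1$ or $2\pmod 3$, and for $\delta=3$ one has $4f(n-4)=f(n)$ when $n\equiv 1\pmod 3$; this matches the fact that the extremal graphs in those residue classes are a union of triangles together with a single $K_2$ or $K_4$, not triangles alone. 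This does not affect the validity of the induction, only your remark that the off-cases are ``strictly weaker''.
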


Together with the fact that all maximal independent sets can be enumerated with polynomial delay (see, e.g., \cite{TsukiyamaIAS77,JohnsonP88}), this implies that all minimal vertex covers of a graph can be enumerated in time $O^*(3^{n/3})$, where the $O^*$-notation suppresses polynomial factors.
Note that the same result can also be obtained by a branching algorithm (see, e.g.~\cite{FominK10}).

The bounds of Theorem~\ref{thm:vc} are tight; a well known lower bound example is a graph consisting of $n/3$ disjoint triangles, which is a chordal distance-hereditary graph. By adding a vertex which is adjacent to every vertex of this graph, we can obtain a lower bound for the maximum number of minimal connected vertex covers of a graph.

\begin{proposition}\label{prop:lower}
There are chordal distance-hereditary graphs  with at least $3^{(n-1)/3}$ minimal connected vertex covers.
\end{proposition}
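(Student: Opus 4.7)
The plan is to realize the bound with a completely explicit graph. Set $k=(n-1)/3$ and take $k$ vertex-disjoint triangles $T_1,\dots,T_k$. Let $G$ be obtained by adding a single vertex $v$ adjacent to every vertex of $T_1\cup\dots\cup T_k$. Then $|V(G)|=3k+1=n$, and I now need (i) $G$ is chordal and distance-hereditary, and (ii) $G$ has at least $3^k$ minimal connected vertex covers.

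For (i), to verify chordality I would show that no induced cycle of length at least $4$ can exist in $G$: any such cycle cannot avoid $v$ (since $G-v$ is a disjoint union of triangles and contains no long cycles), but it also cannot include $v$ (because $v$ is adjacent to every other vertex, so any two non-neighbors of $v$ on the cycle would be joined to $v$ via chords). For the distance-hereditary property I would use that every two vertices of $G$ are at distance at most $2$, via $v$. In a connected induced subgraph $H$, if $v\in V(H)$ then $v$ stays universal in $H$, so the distance between any two vertices of $H$ is again at most $2$ and agrees with their distance in $G$; if $v\notin V(H)$ then $H$ is a connected subgraph of a disjoint union of triangles, hence contained in a single triangle, where distances are trivially preserved.

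For (ii), the key observation is that every minimal connected vertex cover $U$ of $G$ must contain $v$: if $v\notin U$, then $U$ must cover every edge incident to $v$, forcing $V(G)\setminus\{v\}\subseteq U$, but the induced subgraph on $V(G)\setminus\{v\}$ is a disjoint union of triangles and hence disconnected. Assuming $v\in U$, the set $U\setminus\{v\}$ must meet each triangle $T_i$ in at least two of its three vertices in order to cover the three edges of $T_i$. Conversely, for any choice of exactly two vertices in each $T_i$, collected into a set $S$, the set $\{v\}\cup S$ is a vertex cover of $G$, is connected (because $v$ is universal to $S$), and is minimal, since removing any vertex leaves at least one uncovered edge. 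There are $3^k$ such choices of $S$, yielding $3^k=3^{(n-1)/3}$ minimal connected vertex covers.

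The only delicate point I anticipate is distinguishing between minimal and merely connected vertex covers, but this resolves cleanly: each of the $3^k$ constructed sets is already a minimal vertex cover in the ordinary sense, and any minimal vertex cover that happens to be connected is automatically a minimal connected vertex cover (a smaller connected vertex cover would, in particular, be a smaller vertex cover). Together with the verifications of chordality and the distance-hereditary property, this completes the argument.
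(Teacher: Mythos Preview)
Your construction and argument are essentially identical to the paper's: the same graph (a universal vertex attached to $k$ disjoint triangles), the same count of $3^{k}$ minimal connected vertex covers arising from the two-out-of-three choices in each triangle, and the same claim that the graph is chordal and distance-hereditary. The paper simply asserts the last point as ``easy to check,'' whereas you spell out the verifications; your minimality argument via ``removing any vertex uncovers an edge'' is exactly what is needed. One small wording slip: in your chordality argument you refer to ``non-neighbors of $v$ on the cycle,'' but $v$ has no non-neighbors; what you mean is that any vertex of the cycle not \emph{cycle-adjacent} to $v$ is still adjacent to $v$ in $G$, giving a chord.
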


\begin{proof}
Consider the graph $G$ constructed as follows, for a positive integer $k$:
\begin{itemize}
\item for $i\in\{1,\ldots,k\}$, construct a clique $T_i=\{x_i,y_i,z_i\}$,
\item construct a vertex $u$ and make it adjacent to $x_i,y_i,z_i$, for $i\in\{1,\ldots,k\}$.
 \end{itemize}
Observe that every minimal connected vertex cover of $G$ contains $u$ and exactly two vertices of each clique $T_i$, for $i\in\{1,\ldots,k\}$, and every set of this type is a minimal connected vertex cover. Therefore, $G$ has $3^{(n-1)/3}$ minimal connected vertex covers. It is easy to check that the given graph is both chordal and distance-hereditary.
\qed
\end{proof}

We do not know of any better lower bounds for the maximum number of minimal connected vertex covers on graphs in general. We will use the following simple observation to give upper bounds on the number of minimal connected vertex covers.

\begin{observation}\label{obs:cuts}
Let $S$ be a separator of a connected graph $G$. Then for every connected vertex cover $U$ of $G$, $S\cap U\neq \emptyset$. In particular, if $v$ is a cut vertex, then $v$ belongs to every connected vertex cover.
\end{observation}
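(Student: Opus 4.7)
The plan is a direct proof by contradiction, relying only on the definitions. Suppose $U$ is a connected vertex cover of $G$ with $U \cap S = \emptyset$, so $U \subseteq V(G) \setminus S$. Since $S$ is a separator, $G - S$ decomposes into components $C_1, \ldots, C_k$ with $k \geq 2$, and no edge of $G$ joins two distinct components of $G - S$. Because $G[U]$ is connected and entirely contained in $G[V(G) \setminus S]$, it must lie inside a single component, say $U \subseteq C_1$.

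Next I would exhibit an edge of $G$ that $U$ fails to cover. Pick any vertex $w \in C_2$. Because $G$ itself is connected, $w$ has at least one incident edge $e$ in $G$, and the other endpoint of $e$ lies either in $C_2$ or in $S$ (an edge from $C_2$ to some $C_j$ with $j \neq 2$ is ruled out by the component structure of $G - S$). In either case both endpoints of $e$ lie outside $U$, which contradicts $U$ being a vertex cover. This proves $S \cap U \neq \emptyset$.

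The second claim is then immediate: if $v$ is a cut vertex, then $\{v\}$ is a separator, and applying the first statement with $S = \{v\}$ forces $v \in U$ for every connected vertex cover $U$. I do not anticipate any real obstacle; the only point worth a line of care is the degenerate case in which $C_2$ is a single vertex with no internal edges, which is handled by invoking the connectedness of $G$ to produce the needed edge from $C_2$ into $S$.
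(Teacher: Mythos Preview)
Your proof is correct and follows essentially the same approach as the paper's: both assume $S\cap U=\emptyset$ and derive a contradiction from the two defining properties of a connected vertex cover. The only cosmetic difference is the order of the deductions---the paper first uses the vertex-cover property to force two vertices $x,y$ from distinct components of $G-S$ into $U$ and then contradicts connectedness of $G[U]$, whereas you first use connectedness to confine $U$ to a single component and then exhibit an uncovered edge outside it.
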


\begin{proof}
Let $U$ be a connected vertex cover of $G$, and suppose that $S$ is a separator of $G$. 
Then there are vertices $x,y$ such that $x$ and $y$ are in two distinct components of $G-S$, and each of $x$ and $y$ has a  neighbor in $S$.
If $S\cap U=\emptyset$, then $x,y\in U$ and $G[U]$ is disconnected; a contradiction. Hence, $S\cap U\neq\emptyset$. \qed
\end{proof}

Recall that our motivation for enumerating the minimal connected vertex covers of a graph is not for the computation of a connected vertex cover of minimum size. In fact such a set can be computed in time $O(1.7088^n)$, using the following result of Cygan \cite{Cygan12} about the parameterized complexity of the 
problem.

\begin{theorem}[\cite{Cygan12}]\label{thm:cvc-param}
It can be decided in time $O(2^k\cdot n^{O(1)})$ and in polynomial space, whether a graph has a connected vertex cover of size  at most $k$.
\end{theorem}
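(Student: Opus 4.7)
My plan is to use iterative compression as the outer scheme. Process the vertices of $G$ in some order $v_1, \ldots, v_n$, and inductively maintain a connected vertex cover $S_i$ of $G_i := G[\{v_1, \ldots, v_i\}]$ of size at most $k$, whenever one exists. When moving from $G_i$ to $G_{i+1}$, the set $S_i \cup \{v_{i+1}\}$ is a connected vertex cover of $G_{i+1}$ (after a routine case distinction handling the moment a new connected component is created) of size at most $k+1$. Thus it suffices to design a compression subroutine that, given a connected vertex cover $S$ of size $k+1$ in a graph $H$, decides in $O(2^{|S|} \cdot n^{O(1)})$ time and polynomial space whether $H$ admits a connected vertex cover of size at most $k$.

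For the compression step I would enumerate, over all $2^{k+1}$ subsets $X \subseteq S$, the guess that $X$ is the intersection of the hypothetical size-$k$ cover with $S$. For a fixed $X$, one must decide whether there is a set $Y \subseteq V(H) \setminus S$ of size at most $k - |X|$ such that $X \cup Y$ is a connected vertex cover of $H$. The vertex-cover part of this constraint on $Y$ is local: every edge inside $V(H) \setminus S$ not already covered by $X$ must be covered by $Y$, and this can be handled by standard reductions (or a direct LP argument) on the graph induced by the still-uncovered vertices outside $S$. The hard constraint is the connectivity requirement: $Y$ must reconnect the connected components of $H[X]$ by routing through vertices of $V(H) \setminus S$, which is essentially a Steiner-connectivity problem.

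The main obstacle, and the technical heart of \cite{Cygan12}, is to solve this gluing subproblem deterministically in polynomial space without destroying the $2^k$ budget. A randomised $O^*(2^k)$ algorithm follows by applying the cut-and-count technique of Cygan et al.\ to the auxiliary \emph{disjoint Steiner tree} instance produced by a guess $X$; Cygan's contribution is a derandomised replacement based on a more intricate structural analysis of minimum connectors together with a branching procedure whose measure is tailored to the compression setting. Summing $2^{k+1}$ subset guesses against the polynomial-time per-guess subroutine, and iterating the compression $n$ times, yields the stated $O(2^k \cdot n^{O(1)})$ running time in polynomial space.
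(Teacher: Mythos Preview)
The paper does not prove Theorem~\ref{thm:cvc-param}; the statement is quoted from~\cite{Cygan12} and used only as a black box to derive Corollary~\ref{cor:cvc-minimum}. There is thus no proof in this paper to compare your attempt against.

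On the substance of your sketch: iterative compression is indeed the outer scheme in~\cite{Cygan12}, so the framing is right. The gap is your final accounting. After fixing a guess $X\subseteq S$, the residual task---choose a minimum-size $Y\subseteq V(H)\setminus S$ so that $G[X\cup Y]$ is connected---is a Steiner-type problem on the independent set $V(H)\setminus S$, and it is \NP-hard in its own right (the components of $G[X]$ act as universe elements and each candidate vertex of $V(H)\setminus S$ acts as a set of the components it touches, so \textsc{Set Cover} reduces to it). Hence there is no ``polynomial-time per-guess subroutine,'' and summing $2^{k+1}$ guesses against such a subroutine does not yield $O(2^{k}\cdot n^{O(1)})$. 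The non-trivial content of~\cite{Cygan12} is precisely to organise the compression step so that the enumeration over intersections with $S$ and the cost of the connectivity subproblem (handled via a polynomial-space $2^{t}\cdot n^{O(1)}$ Steiner-tree routine, $t$ being the number of pieces to join) combine to $2^{k}\cdot n^{O(1)}$ rather than multiply. Your outline defers this with ``a more intricate structural analysis,'' which is exactly the part that would need to be supplied.
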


Combining the algorithm of Cygan~\cite{Cygan12} with brute force checking of all vertex subsets of size at most $k$, we obtain the following corollary.

\begin{corollary}\label{cor:cvc-minimum}
It can be decided in time $O(1.7088^n)$ and in polynomial space, whether a graph has a connected vertex cover of size  at most $k$.
\end{corollary}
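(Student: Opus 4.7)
The plan is to balance Theorem~\ref{thm:cvc-param} against a brute-force enumeration, with the balance point chosen so that neither piece exceeds $O(1.7088^{n})$. Let $\alpha^{\star}$ be the unique solution in $(1/2,1)$ of $\alpha = H(\alpha)$, where $H$ denotes the binary entropy function; one checks that $\alpha^{\star} \approx 0.7731$ and $2^{\alpha^{\star}} \approx 1.7088$. If $k \leq \alpha^{\star} n$, I would simply invoke Theorem~\ref{thm:cvc-param}: its running time $O(2^{k} \cdot n^{O(1)})$ is then at most $O(2^{\alpha^{\star} n} \cdot n^{O(1)}) = O(1.7088^{n})$, and the space is already polynomial.

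If $k > \alpha^{\star} n$, I would first dispatch the trivial cases ($G$ disconnected, or $G$ connected with $k \geq n$, in which case $V(G)$ itself is a connected vertex cover), and then assume $G$ is connected and $\alpha^{\star} n < k < n$. Here the key observation is that in a connected graph any connected vertex cover can be grown one vertex at a time while remaining one: if $U \subsetneq V(G)$ with $G[U]$ connected and $U$ a vertex cover, then connectedness of $G$ forces some $v \in V(G) \setminus U$ to have a neighbor in $U$, and $U \cup \{v\}$ is again a connected vertex cover. Consequently $G$ has a connected vertex cover of size at most $k$ if and only if it has one of size \emph{exactly} $k$, so it suffices to enumerate the $\binom{n}{k}$ vertex subsets of that size, one at a time, and test each in polynomial time using the cut-vertex characterization recalled in Section~\ref{sec:defs}.

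Because $k > \alpha^{\star} n > n/2$, we have $\binom{n}{k} = \binom{n}{n-k}$ with $n-k < (1-\alpha^{\star}) n < n/2$, and the standard entropy bound gives $\binom{n}{n-k} \leq 2^{H(1-\alpha^{\star}) n} \cdot n^{O(1)} = 2^{H(\alpha^{\star}) n} \cdot n^{O(1)} = 2^{\alpha^{\star} n} \cdot n^{O(1)}$, so the second branch also runs in $O(1.7088^{n})$ time and polynomial space. The only mildly delicate point is pinning down the crossover $\alpha^{\star}$ so that the two running times balance; the extendability argument for connected vertex covers and the entropy estimate for $\binom{n}{k}$ are both standard, so I do not anticipate any real obstacle.
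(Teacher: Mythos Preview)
Your proposal is correct and follows essentially the same route as the paper: split at the threshold $\alpha^\star$ satisfying $\alpha=H(\alpha)$ (the paper writes this equivalently as $2^{\alpha}=(1/\alpha)^{\alpha}(1/(1-\alpha))^{1-\alpha}$), apply Theorem~\ref{thm:cvc-param} when $k\le\alpha^\star n$, and brute-force over size-$k$ subsets with the entropy bound when $k>\alpha^\star n$. Your extendability argument (that in a connected graph a connected vertex cover of size $<k$ can be grown to one of size exactly $k$) is a small piece of rigor the paper leaves implicit when it checks only subsets of size exactly $k$.
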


\begin{proof}
Let $\alpha\in(1/2,1)$ be the unique root of the equation
$$2^{\alpha}=\Big(\frac{1}{\alpha}\Big)^\alpha\cdot\Big(\frac{1}{1-\alpha}\Big)^{1-\alpha}.$$
It can be seen that $\alpha\approx 0.7728$ and $2^\alpha<1.7088$. 
If $k\leq \alpha n$, we use Theorem~\ref{thm:cvc-param} and solve the considered instance by the algorithm of Cygan~\cite{Cygan12} in time $O^*(2^{\alpha n})$. If $k>\alpha n$, then 
we use brute force and for each $U\subseteq  V(G)$ of size $k$, check whether $U$ is a connected vertex cover. 
Because $n/2\leq \alpha n\leq k\leq n$, we have
$\binom{n}{k}\leq 2^{H(1-\alpha)n}$, where
$H(x)=-x \log_2x-(1-x)\log_2(1-x)$
is the \emph{entropy function} (see, e.g.,~\cite{FominK10}). 
By the choice of $\alpha$, $2^{H(\alpha)n}\leq 2^{\alpha n}$, and we solve the problem in time $O^*(2^{\alpha n})$ if $k>\alpha n$.
Since $2^\alpha<1.7088$, the running time of the algorithm is $O(1.7088^n)$.\qed
\end{proof}

The majority of our upper bounds for the number of minimal connected vertex covers will be given via enumeration algorithms that are recursive branching algorithms.  For the analysis of the running time and the number of sets that are produced by such an algorithm, we use a technique based on solving recurrences for branching steps and branching rules respectively. 
We refer to the book \cite{FominK10} for a detailed introduction. 
To analyze such a branching algorithm solving an enumeration problem, one assigns
to each \emph{instance} $I$ of the recursive algorithm a \emph{measure} $\mu(I)$ that one may 
consider as the size of the instance $I$. 
If the algorithm branches on an instance $I$ into $t$ new instances, such that the  
measure decreases by $c_1, c_2, \ldots, c_t$ for each new instance, respectively, we say that
$(c_1, c_2, \ldots, c_t)$ is the \emph{branching vector} of this step, respectively branching rule.
Let us first consider the case  that the branching algorithm has only one branching vector. 
In particular, if $L(s)$ is the maximum number of leaves of a search tree for an instance $I$ of 
measure $s=\mu(I)$, then we obtain the recurrence 
$L(s) \le L(s-c_1) +L(s-c_2) + \ldots + L(s-c_t)$. Then standard analysis (see~\cite{FominK10}), gives us that  $L(s)=O^*(\alpha^s)$,  
where $\alpha$ is the unique positive real root of $x^c-x^{c-c_1}- \ldots - x^{c-c_t} = 0$ for $c=\max\{c_1,\ldots,c_t\}$. 
If $\mu(I)\leq n$ for all instances $I$, then the number of leaves of the search tree is $O^*(\alpha^n)$.
The number $\alpha$ is called the {\it branching number} of this branching vector. 
Now let us consider the general case  in which different branching vectors are involved at different steps of an algorithm.  Then the  branching vector with the highest branching number gives an upper bound on $L(s)$. 
This approach allows us to  achieve running times of the form $O^*(\alpha^n)$ for some real $\alpha \ge 1$. As the number of minimal connected vertex covers produced by an algorithm is upper bounded by the number of leaves of the search tree, we also obtain the upper bound for the number of minimal connected vertex covers of the same form $O^*(\alpha^n)$. If $\alpha$ has been obtained by rounding then one may replace $O^*(\alpha^n)$ by $O(\alpha^n)$.

Now we intend to provide a tool that allows to directly prove the following stronger 
statement: for all $n\geq 1$ the number of minimal connected vertex covers in an $n$-vertex graph of a certain graph class is at most $\alpha^n$. Actually this approach is much more general and can be used for many enumeration problems solved by branching algorithms. 
We start with a lemma providing an upper bound for the solution of the recurrences in
recursive branching algorithms. 

\begin{lemma}\label{lem:recurrences}
Let $\{(c_1^{(i)},\ldots, c_{t_i}^{(i)})\mid i\in J\}$ be a possibly infinite collection of vectors of positive integers.
Suppose that  $L\colon\mathbb{Z}\rightarrow\mathbb{N}_0$ is a function such that 
\begin{itemize}
\item[i)] $L(k)=0$ if $k<0$, and
\item[ii)] $L(k)\leq \max\{1, \max\{L(k-c_1^{(i)})+\ldots+L(k-c_{t_i}^{(i)})\mid i\in J \}\}$ 
for $k\geq 0$.
\end{itemize}
Then for $k\geq 0$, 
$$L(k)\leq \alpha^k,$$
where $\alpha=\max\{\alpha_i \mid i\in J\}$ and 
for $i\in J$, $\alpha_i\geq 1$ is the unique positive root of 
$$x^{c^{(i)}}-x^{c^{(i)}-c_1^{(i)}}- \ldots - x^{c^{(i)}-c^{(i)}_{t_i}} = 0,$$ 
where $c^{(i)}=\max\{c_1^{(i)},\ldots,c_{t_i}^{(i)}\}$. 
\end{lemma}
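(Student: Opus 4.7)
The plan is to prove $L(k)\leq \alpha^k$ for all integers $k$ by strong induction on $k$. The key reformulation is that dividing the characteristic polynomial $x^{c^{(i)}}-x^{c^{(i)}-c_1^{(i)}}-\cdots-x^{c^{(i)}-c_{t_i}^{(i)}}=0$ by $x^{c^{(i)}}$ gives the equivalent normalized identity $\sum_{j=1}^{t_i}\alpha_i^{-c_j^{(i)}}=1$. This, together with the monotonicity of $x\mapsto x^{-c}$, is essentially all that is needed.

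For the base of the induction: if $k<0$, then $L(k)=0\leq \alpha^k$ since $\alpha>0$. If $k=0$, each argument $0-c_j^{(i)}$ is strictly negative because $c_j^{(i)}\geq 1$, so every $L(-c_j^{(i)})$ vanishes and condition (ii) reduces to $L(0)\leq \max\{1,0\}=1=\alpha^0$ (using $\alpha\geq 1$).

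For the inductive step, fix $k\geq 1$ and assume $L(j)\leq \alpha^j$ for all integers $j<k$. By condition (ii) either $L(k)\leq 1\leq \alpha^k$, which is immediate from $\alpha\geq 1$ and $k\geq 0$, or there exists some $i\in J$ with $L(k)\leq \sum_{j=1}^{t_i}L(k-c_j^{(i)})$. Since $c_j^{(i)}\geq 1$, every argument $k-c_j^{(i)}$ is strictly less than $k$, so the induction hypothesis applies and yields $L(k)\leq \sum_{j=1}^{t_i}\alpha^{k-c_j^{(i)}}=\alpha^k\sum_{j=1}^{t_i}\alpha^{-c_j^{(i)}}$.

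It remains to verify $\sum_{j=1}^{t_i}\alpha^{-c_j^{(i)}}\leq 1$, which is the only substantive step. The function $g(x)=\sum_{j=1}^{t_i}x^{-c_j^{(i)}}$ is strictly decreasing on $(0,\infty)$, so from $g(\alpha_i)=1$ (the normalized form of the characteristic equation) and the hypothesis $\alpha\geq \alpha_i$ we obtain $g(\alpha)\leq g(\alpha_i)=1$. Plugging back gives $L(k)\leq \alpha^k$, closing the induction. I expect no real obstacle here; the only mild care needed is to handle the $k<0$ and $k=0$ base cases cleanly and to justify that $\alpha_i$ is well-defined as the unique positive root of $g(x)=1$, which follows from strict monotonicity of $g$ and the limits $g(x)\to\infty$ as $x\to 0^+$ and $g(x)\to 0$ as $x\to\infty$.
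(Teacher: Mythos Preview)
Your proof is correct and follows essentially the same strong-induction argument as the paper's own proof. The only cosmetic difference is that you normalize the characteristic equation to $\sum_{j}\alpha_i^{-c_j^{(i)}}=1$ and invoke the monotonicity of $g(x)=\sum_j x^{-c_j^{(i)}}$ explicitly, whereas the paper leaves this step implicit when passing from $\alpha\geq\alpha_i$ to $\alpha^{c^{(i)}}\geq\sum_j\alpha^{c^{(i)}-c_j^{(i)}}$; the underlying reasoning is identical.
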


\begin{proof}
The proof is done by induction on $k$. 

By $ii)$, we have that $L(k)\leq 1 \le \alpha^k$ holds for $k=0$. Notice also that $L(k)=0\leq \alpha^k$ for $k<0$ by i). Let $k\geq 1$ and assume inductively that the claim holds for the smaller values of $k$. By ii), $L(k)\leq L(k-c_1^{(i)})+\ldots+L(k-c_{t_i}^{(i)})$ for some $i\in J$.
If $L(k)\leq 1$, then $L(k)\leq\alpha^k$. Assume that $L(k)>1$.
By ii), $L(k)\leq L(k-c_1^{(i)})+\ldots+L(k-c_{t_i}^{(i)})$ for some $i\in J$.
By induction, $$L(k)\leq \alpha^{k-c_1^{(1)}}+\ldots+\alpha^{k-c_{t_i}^{(i)}}.$$
Recall that $\alpha\geq\alpha_i$ and, by the definition of $\alpha_i$, 
$$\alpha^{c^{(i)}}\geq \alpha^{c^{(i)}-c_1^{(i)}}+ \ldots + \alpha^{c^{(i)}-c^{(i)}_{t_i}},$$ 
and, therefore, 
$$\alpha^k\geq \alpha^{k-c_1^{(i)}}+\ldots +\alpha^{k-c^{(i)}_{t_i}}.$$ 
Hence, $L(k)\leq \alpha^k$.\qed
\end{proof}

Let us summarize the typical setting in which Lemma~\ref{lem:recurrences} will be applied
to upper bound the number of certain objects, for example minimal connected vertex
covers, in an $n$-vertex graph. 

\begin{lemma}\label{lem:upper-bound}
Let $\mathcal{A}$ be a branching algorithm enumerating all objects of property $P$ of an
input graph $G$. Let $L(s)$ be the maximum number of leaves of a search tree rooted
at an instance of measure $s$.
Suppose the following conditions are satisfied.
\begin{itemize}
\item[a)]  There is a measure $\mu$ assigning to each instance of the algorithm an integer
such that $\mu(I) \le n$ for all instances. 
\item[b)]  $L(0)=1$ for $s=0$.
\item[c)]  $L(s)=0$ for all $s<0$.
\item[d)]  The recurrences  corresponding to the branching vectors of  $\mathcal{A}$ 
can be written in the form of condition ii) of the Lemma~\ref{lem:recurrences}.
\end{itemize}
Then  
the number of leaves of the search tree and thus the number of objects of property $P$ 
 is at most $\alpha^n$, where $\alpha$ is the 
largest branching number. 
\end{lemma}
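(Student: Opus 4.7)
The plan is to show that the statement follows almost directly from Lemma~\ref{lem:recurrences} by carefully matching the hypotheses of that lemma to the situation of the branching algorithm $\mathcal{A}$. First I would verify that the leaf-count function $L(s)$ satisfies the two hypotheses of Lemma~\ref{lem:recurrences}: condition c) of the present lemma is literally condition i) there, and conditions b) and d) together yield condition ii). Indeed, for $s \geq 0$ and any instance $I$ of measure $s$, either $\mathcal{A}$ terminates on $I$ (so the subtree rooted at $I$ has exactly one leaf, giving the bound $1$), or $\mathcal{A}$ branches on $I$ according to one of the recurrences listed in d), which by assumption has the required form $L(s) \leq L(s - c_1^{(i)}) + \ldots + L(s - c_{t_i}^{(i)})$ for some $i \in J$. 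Taking the maximum over all instances of measure $s$ produces precisely the recurrence of condition ii).

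Next, I would apply Lemma~\ref{lem:recurrences} to conclude that $L(s) \leq \alpha^s$ for every $s \geq 0$, where $\alpha = \max\{\alpha_i \mid i \in J\}$ is the largest branching number associated with $\mathcal{A}$. Since Lemma~\ref{lem:recurrences} guarantees $\alpha_i \geq 1$ for each $i$, we have $\alpha \geq 1$, so $\alpha^s$ is monotonically non-decreasing in $s$.

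Finally, I would use hypothesis a). The input instance $I_0$ satisfies $\mu(I_0) \leq n$, so the number of leaves of the search tree of $\mathcal{A}$ rooted at $I_0$ is at most $L(\mu(I_0)) \leq \alpha^{\mu(I_0)} \leq \alpha^n$. Since $\mathcal{A}$ enumerates the objects of property $P$ and each such object is produced at some leaf of the search tree, the number of objects enumerated is bounded by the number of leaves, hence by $\alpha^n$.

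There is essentially no hard step: the real combinatorial work has been carried out already in Lemma~\ref{lem:recurrences}, and the present lemma is merely a convenient repackaging that lets one read off the $\alpha^n$ bound on the number of enumerated objects directly from the branching vectors of $\mathcal{A}$ and the fact that the measure is upper bounded by $n$. The only mild subtlety to double-check is that condition b) ($L(0) = 1$) is consistent with condition ii) of Lemma~\ref{lem:recurrences} (which only requires $L(0) \leq 1$), so no conflict arises at the base of the induction.
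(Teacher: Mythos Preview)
Your proposal is correct and follows exactly the approach of the paper, which simply states that the lemma is an immediate consequence of Lemma~\ref{lem:recurrences}. You have merely spelled out in detail how conditions a)--d) feed into conditions i) and ii) of Lemma~\ref{lem:recurrences} and how the bound $\alpha^n$ then follows from $\mu(I_0)\leq n$; this is precisely the intended (and only natural) argument.
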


Lemma~\ref{lem:upper-bound} is an immediate consequence of Lemma~\ref{lem:recurrences}. 
It is worth mentioning  that c) and d) hold if no execution of $\mathcal{A}$ produces instances with negative measures, which is a common feature of branching algorithm. 
Notice that in this case we have that if $\{(c_1^{(i)},\ldots, c_{t_i}^{(i)})\mid i\in J\}$ is the collection of branching vectors for such an algorithm and 
$c=\min_{i\in J}\max\{c_j^{(i)}\mid 1\leq j\leq t_i\} $, then we do not branch on instances $I$ with $\mu(I)<c$. Hence, if $\mu(I)<c$, then the search tree has at most one leaf as required by ii). If $\mu(I)\geq c$, then the number of leaves is upper bounded by the value produced by one of the recurrences if we branch on the instance or by 1 if we do not branch.

\begin{remark}
Lemma~\ref{lem:upper-bound} also holds when the measure $\mu$ assigns to each 
instance of the algorithm a rational such that $\mu(I) \le n$ for all instances. 
See the corresponding statement in~\cite{FGPS09} about the possibility of an 
inductive proof over non-integral measures. 
\end{remark}

\section{General graphs} \label{sec:general}

In this section we give a non-trivial upper bound on the maximum number of minimal 
connected vertex covers in arbitrary $n$-vertex graphs. 

\begin{theorem}\label{thm:gen}
The maximum number of minimal connected vertex covers of an arbitrary graph is 
at most $1.8668^n$,
 and these can be enumerated in time $O(1.8668^n)$.
\end{theorem}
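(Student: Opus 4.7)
The plan is to design a recursive branching algorithm $\mathcal{A}$ that enumerates all minimal connected vertex covers of $G$, and then to analyze it via Lemma~\ref{lem:upper-bound}. Each subproblem of $\mathcal{A}$ carries a tripartition $(I,O,F)$ of $V(G)$, where $I$ is the set of vertices committed to the cover, $O$ the set committed to its complement, and $F$ the remaining free vertices; the measure is $\mu=|F|\le n$. At a leaf of the recursion we have $F=\emptyset$, and we check in polynomial time whether $I$ is a minimal connected vertex cover, using the characterization stated just before Theorem~\ref{thm:vc}, and output it if so.

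The algorithm alternates reduction with branching. The reduction rules strictly decrease $\mu$ without branching: any neighbor in $F$ of a vertex in $O$ is moved to $I$ (otherwise the corresponding edge is uncovered); any free vertex whose deletion disconnects two $I$-vertices inside $G[I\cup F]$ is moved to $I$ by Observation~\ref{obs:cuts}; and any free vertex that must eventually lie in $I$ on account of minimality of the final cover is moved there eagerly. When no reduction applies, the algorithm picks a free vertex $v$ chosen to maximize the effect of branching and creates two children corresponding to $v\in I$ and $v\in O$; in the second child the first reduction immediately forces every free neighbor of $v$ into $I$, yielding the asymmetric branching vector $(1, 1+d)$ where $d$ is the number of free neighbors of $v$, usually improved by cascaded reductions in one or both children. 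Consequently, a free vertex of high degree in $G[F]$ gives a favorable vector automatically, and the analysis reduces to carefully choosing $v$ when the free-graph maximum degree is small.

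If every branching vector produced by $\mathcal{A}$ has branching number at most $\alpha=1.8668$, then Lemma~\ref{lem:upper-bound} gives at most $\alpha^n$ leaves in the search tree, so both the number of minimal connected vertex covers of $G$ and (up to the polynomial per-node work) the running time are bounded by $O(1.8668^n)$. I expect the main obstacle to be the low-degree case analysis: free vertices of degree one or two do not by themselves provide branching vectors with branching number at most $\alpha$, so one has to combine the branching step with guaranteed follow-up reductions in each child---exploiting the minimality characterization (each cover vertex is either a cut vertex of $G[I]$ or has a neighbor in $O$) and the connectivity invariant---or branch preferentially on a well-chosen neighbor of $v$ rather than on $v$ itself. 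Covering all such local configurations while respecting the threshold $\alpha$ is the technical heart of the proof. \qed
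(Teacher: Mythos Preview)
Your plan has a genuine gap, and the bottleneck is not where you locate it. The branching vector $(1,1+d)$ already has branching number at most $(1+\sqrt{5})/2\approx 1.6181$ for every $d\ge 1$ (the worst case is $d=1$, giving $(1,2)$), so free vertices of degree one or two in $G[F]$ are harmless for the target $\alpha=1.8668$. The real obstruction is $d=0$: once $F$ becomes an independent set of $G$, every free vertex has no free neighbor, your branching vector degenerates to $(1,1)$ with branching number $2$, and none of the reduction rules you list is guaranteed to fire. A free vertex $v$ with $N_G(v)\subseteq I$ and no forced connectivity role can legitimately go either way, so you pay a factor of~$2$ per such vertex and Lemma~\ref{lem:upper-bound} cannot give anything below $2^n$.

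The paper handles this by abandoning pure branching at that point. When $F$ is independent (and $N_G(v)\subseteq S$ for every discarded $v$), the set $S$ is already a vertex cover of $G$; the vertices of $U\setminus S\subseteq F$ serve only to connect the $s$ components of $G[S]$, and since each such vertex must be a cut vertex of $G[U]$ one gets $|U\setminus S|\le s-1\le h$, where $h=n-|F|$. The algorithm then simply enumerates all $X\subseteq F$ with $|X|\le s-1$. The bound $1.8668^n$ does not come from a single branching number but from the two-phase estimate: at most $\alpha^h$ ``sub-leaves'' with a given $h$ (from the $(1,2)$ branching), each producing at most $\sum_{j\le h}\binom{n-h}{j}$ leaves; maximizing $\alpha^h\cdot 2^{H(h/(n-h))(n-h)}$ over $h\in(0,n/3)$ via the entropy bound yields the constant $1.8668$. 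This global combinatorial step---bounding $|U\cap F|$ by the number of components of $G[S]$ once $F$ is independent---is the heart of the argument, and it is exactly what your outline is missing.
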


\begin{proof}
We give a branching algorithm that we call $\textsc{EnumCVC}(S,F)$, which takes as the input two disjoint sets $S,F\subseteq V(G)$ and outputs minimal connected vertex covers $U$ of $G$ such that $S\subseteq U\subseteq S\cup F$. We call $\textsc{EnumCVC}(\emptyset,V(G))$ to enumerate the minimal connected vertex covers of $G$.
We say that $v\in V(G)$ is \emph{free} if $v\in F$ and $v$ is \emph{selected} if $v\in S$. The algorithm branches on a subset of free vertices and either \emph{selects} some of them to be included in a (potential) minimal connected vertex cover or \emph{discards} some of them  by forbidding them to be selected.

\medskip
\noindent
$\textsc{EnumCVC}(S,F)$
\begin{enumerate}
\item If $S$ is a minimal connected vertex cover then return $S$ and stop. 
\item If $F=\emptyset$, then stop.
\item If there are two adjacent free vertices $u,v\in F$, then branch as follows:
\begin{itemize}
\item select $u$, i.e., set $S'=S\cup\{u\}$, $F'=F\setminus\{u\}$, and call $\textsc{EnumCVC}(S',F')$,
\item discard $u$ and select its neighbors, i.e., set $S'=S\cup N_G(u)$, $F'=F\setminus N_G[u]$, and call  $\textsc{EnumCVC}(S',F')$.
\end{itemize}
\item If $F$ is an independent set, then let $s$ be the number of components of $G[S]$. Consider every non-empty set $X\subseteq F$ of size at most $s-1$ and output $S\cup X$ if $S\cup X$ is a minimal connected vertex cover of $G$.  
\end{enumerate}

To argue that the algorithm is correct, consider a minimal connected vertex cover $U$ of $G$ such that $S\subseteq U\subseteq S\cup F$ and for every $v\in V(G)\setminus(S\cup F)$, $N_G(v)\subseteq S$. 
If $F=\emptyset$, then $U=S$ and the algorithm outputs $U$ on Step~1.
Assume inductively that  $\textsc{EnumCVC}(S',F')$ outputs $U$ for every pair of disjoint sets $S',F'$ such that $S'\subseteq U\subseteq S'\cup F'$ and $|F'|<|F|$.
Clearly, if $S$ is a connected vertex cover of $G$, then $U=S$ by minimality and $U$ is returned by the algorithm on Step~1. If $S$ is not a connected vertex cover, then $F\cap U\neq\emptyset$ and the algorithm does not stop at Step~2. If there are two adjacent free vertices $u,v\in F$, then $u\in U$, or $u\notin U$ and $N_G(u)\subseteq U$, because at least one endpoint of every edge is in $U$. In the first case we have that $S'\subseteq U\subseteq S'\cup F'$, where  $S'=S\cup\{u\}$ and $F'=F\setminus\{u\}$. 
In the second case,  $S'\subseteq U\subseteq S'\cup F'$, where  $S'=S\cup N_G(u)$ and  $F'=F\setminus N_G[u]$. 
By induction, the algorithm outputs $U$ when we call  $\textsc{EnumCVC}(S',F')$. Finally, if $F$ is an independent set and $S$ is not a connected vertex cover of $G$, then 
$U=S\cup X$ for $X\subseteq F$. Because $F$ is independent and $N_G(v)\subseteq S$ for $v\in V(G)\setminus(S\cup F)$, $S$ is a vertex cover of $G$, i.e., the vertices of $X$ are included in $U$ only to ensure the connectivity of $G[U]$. We have that each vertex of $X$ is a cut vertex of $G[U]$. Since $G[S]$ has $s$ components, $|X|\leq s-1$. Therefore, the algorithm outputs $U$. 
To complete the correctness proof, it remains to notice that if $S=\emptyset$ and $F=V(G)$, then $S\subseteq U\subseteq S\cup F$ and $V(G)\setminus(S\cup F)=\emptyset$. Hence, $\textsc{EnumCVC}(\emptyset,V(G))$ outputs $U$. As $U$ is an  arbitrary minimal connected vertex cover, the algorithm outputs all minimal connected vertex covers. 

We have that $\textsc{EnumCVC}(\emptyset,V(G))$ lists all minimal connected vertex covers of $G$. To obtain the upper bound on the number of minimal connected vertex covers of $G$, we upper bound the number of leaves of the search tree produced by the algorithm.

Observe that by executing Steps~1--3, the algorithm either produces a leaf of the search tree or a node of the tree corresponding to a pair of sets $S$ and $F$ such that $F$ is independent. 
We call a node corresponding to such $S$ and $F$ a \emph{sub-leaf} or \emph{$(S,F)$-sub-leaf}. Note that the children of a sub-leaf are leaves of the search tree produced by Step~4.
The only branching rule (Step~3) has branching vector $(1,2)$ since we remove at least one free vertex in the first branch and at least two, i.e., $u$ and $v$, in the second one. This branching vector has branching number $\alpha \approx 1.6181$.
Moreover, it can be shown by the standard analysis (see~\cite{FominK10}) that by executing Steps~1--3 the algorithm produces $O^*(\alpha^h)$ $(S,F)$-sub-leaves such that $h=n-|F|$.  
We use induction to get rid of the factors hidden in the $O^*$-notation.

For an instance $(S,F)$, consider the search tree produced by Steps~1--3. Denote by $N(k)$ the maximum number of descendant nodes with  $n-h$ free vertices of an instance with $|F|=k+n-h$ free vertices. We claim that $N(k)\leq\alpha^k$.  
If $k<0$, then $N(k)=0$. If $k=0$, then the node itself has $n-h$ free vertices and $N(k)=1$. If $k=1$, then only the child corresponding to the first branch of Step~3 has $n-h$ free vertices. Suppose that $k\geq 2$. Then $N(k)\leq N(k-1)+N(k-2)$, because of the branching on Step~3.
By induction, we have that 
$$N(k)\leq N(k-1)+N(k-2)\leq\alpha^{k-1}+\alpha^{k-2}\leq \alpha^k.$$ 
We conclude that $N(h)\leq\alpha^k$, that is by executing Steps~1--3 we produce at most $\alpha^h$ nodes with $n-h$ free vertices and, therefore, at most $\alpha^k$ $(S,F)$-sub-leaves such that $h=n-|F|$.

Now we consider the $(S,F)$-sub-leaves of the search tree and Step~4. 
We have the following two cases.

\noindent
{\bf Case 1.}  $h=n-|F|\geq n/3$. Then $|F|\leq 2n/3$. Clearly, there are at most $2^{n-h}$ sets $X\subseteq F$ of size at most $s-1$ and such an $(S,F)$-sub-leaf  has at most $2^{n-h}$ children. Since there are 
$\alpha^k$
$(S,F)$-sub-leaves with $h=n-|F|$, the total number of children of these nodes is 
$\alpha^h\cdot 2^{n-h}$. 
Since, $h\geq n/3$ and $\alpha<2$,
the number of these children is 
$\alpha^{n/3}\cdot 2^{2n/3}$.

\noindent
{\bf Case 2.} $h=n-|F|<n/3$. Let $s\geq 2$ be the number of components of $G[S]$. We have that $s-1\leq h$. Let $\beta=h/n$.   Then $h=\beta n$, $n-h=(1-\beta)n$ and
$h/(n-h)=\beta/(1-\beta)\leq 1/2$.
The number of non-empty sets $X\subseteq F$ such that $|X|\leq s-1$ is 
$$\binom{n-h}{1}+\ldots+\binom{n-h}{s-1}\leq\binom{(1-\beta)n}{1}+\ldots+\binom{(1-\beta)n}{\beta n} \leq 2^{H(\beta/(1-\beta))(1-\beta)n},$$
where 
$H(x)=-x \log_2x-(1-x)\log_2(1-x)$
is the entropy function (see, e.g.,~\cite{FominK10}). 
Let
$$f(\beta)=\alpha^\beta\cdot 2^{H(\beta/(1-\beta))(1-\beta)}=\Big(\frac{1+\sqrt{5}}{2}\Big)^\beta\cdot\Big(\frac{1-\beta}{\beta}\Big)^\beta\Big(\frac{1-\beta}{1-2\beta}\Big)^{1-2\beta}.$$
The function $f(\beta)$ on the interval $(0,1/3)$ has the maximum 
value\footnote{The computations have been done by computer.} 
for $\beta^*=\frac{1}{2}-\frac{1}{2\sqrt{3+2\sqrt{5}}}$ and $f(\beta^*)\approx 1.86676$. Since the number of $(S,F)$-sub-leaves with $n-|F|=h$ is 
$\alpha^h$
we obtain that the total number of children of these sub-leaves is
$f(\beta^*)^n$.

\medskip
Because  $\alpha\approx 1.61803<1.8668$,  $\alpha^{1/3}\cdot 2^{2/3}\approx 1.86369<1.8668$ and $f(\beta^*)\approx 1.86676<1.8668$, the total number of leaves of the search tree is $1.8668^n$ and, therefore, the number of minimal connected vertex covers is $1.8668^n$.

It remains to observe that each step of $\textsc{EnumCVC}$ can be done in polynomial time. Then the bound on the number of leaves of the search tree  immediately implies that the algorithm runs in time $O(1.8668^n)$.
\qed
\end{proof}

\section{Graphs of chordality at most 5}\label{sec:chord}
The upper bound that we proved in the previous section leaves a gap between that bound and the best known lower bound given in Proposition \ref{prop:lower}. In this section, we will narrow this gap for graphs of chordality at most 5, and we will close the gap for chordal graphs, i.e., graphs of chordality at most 3. We start with this latter class. 

\begin{theorem}\label{thm:chord}
The maximum number of minimal connected vertex covers of a chordal graph is at most $3^{n/3}$, and these can be enumerated in time $O^*(3^{n/3})$.
\end{theorem}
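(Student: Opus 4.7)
I will prove Theorem~\ref{thm:chord} by designing a branching algorithm $\textsc{EnumCVC-Chordal}(S,F)$ analogous to $\textsc{EnumCVC}$ of Theorem~\ref{thm:gen}, whose only change is the branching rule: instead of branching on an arbitrary pair of adjacent free vertices, it branches on a simplicial vertex of $G[F]$. Since $G[F]$ is chordal as an induced subgraph of $G$, such a vertex always exists when $F\neq\emptyset$; moreover, whenever $G[F]$ has at least one edge there is a simplicial vertex $v\in F$ whose free neighbourhood $K:=N_G(v)\cap F=\{u_1,\dots,u_k\}$ is a non-empty clique of some size $k\geq 1$. Before each branching, the standard forced reduction is applied: whenever a free vertex has a neighbour in $D$, move it into $S$, since it must belong to every extension of $S$ to a (connected) vertex cover.

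The branching rule then splits into $k+1$ cases according to the intersection of the clique $\{v\}\cup K$ with $U$: case~(0), $v\notin U$, which forces $K\subseteq U$; and, for $i=1,\dots,k$, case~($i$), $v\in U$ with $u_1,\dots,u_{i-1}\in U$ and $u_i\notin U$, which forces $N_G(u_i)\subseteq U$. By the minimality characterisation recalled in Section~\ref{sec:defs} together with the fact that $v$ is simplicial, whenever $v\in U$ and all of $K\subseteq U$ the vertex $v$ cannot be a cut vertex of $G[U]$ (its $U$-neighbours form a clique), so $v$ requires a neighbour outside $U$; after the forced reduction this neighbour must lie in $K$, so the cases above are exhaustive. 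Each branch removes the entire set $\{v\}\cup K$ from $F$, so the measure $|F|$ drops by at least $k+1$, giving branching vector $(k+1,\dots,k+1)$ with $k+1$ entries and branching number $(k+1)^{1/(k+1)}$, which attains its maximum $3^{1/3}$ at $k=2$ (and is smaller for $k=1$ and for $k\geq 3$).

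When $F$ becomes independent the algorithm enters a sub-leaf and enumerates the candidate extensions $X\subseteq F$ of size at most $s-1$ (where $s$ is the number of components of $G[S]$) exactly as in Step~4 of $\textsc{EnumCVC}$. Applying Lemma~\ref{lem:upper-bound} to the branching part gives at most $(3^{1/3})^{n-f}=3^{(n-f)/3}$ sub-leaves at level $|F|=f$, so once one shows that each sub-leaf produces at most $3^{f/3}$ minimal connected vertex covers the total count is at most $3^{h/3}\cdot 3^{(n-h)/3}=3^{n/3}$. Every step of $\textsc{EnumCVC-Chordal}$ runs in polynomial time (testing MCVC takes $O(nm)$), so the enumeration runs in time $O^*(3^{n/3})$ as required.

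The step I expect to be the main obstacle is the per-sub-leaf bound of $3^{f/3}$. At a sub-leaf the forced reductions have been exhausted, so every $v\in F$ added to $U$ has $N_G(v)\subseteq S$, and by the minimality characterisation such a $v$ must be a cut vertex of $G[U]$; the chordality of $G[S\cup F]$ (and hence of $G[S]$) sharply constrains this cut-vertex role. I plan to argue via a clique-tree / component analysis of $G[S]$ that the admissible $X$'s correspond to at most one vertex chosen from each of at most $f/3$ triangle-like blocks in the chordal decomposition of $G[S\cup F]$, producing at most $3^{f/3}$ valid sets and mirroring the essentially tight configuration of Proposition~\ref{prop:lower}.
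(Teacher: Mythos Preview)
Your approach is quite different from the paper's and, as written, incomplete. The paper does not branch at all: it observes that if $S$ is the set of all cut vertices of $G$ and $G'=G-S$, then $U$ is a minimal connected vertex cover of $G$ if and only if $S\subseteq U$ and $U\setminus S$ is a minimal vertex cover of $G'$. Chordality is used exactly once, to show that $S$ together with \emph{any} vertex cover of $G'$ is already connected (a shortest detour around a missing non-cut vertex must be entirely covered, by chordality). The bound $3^{n/3}$ then drops out of Theorem~\ref{thm:vc} applied to $G'$, and enumeration reduces to listing the minimal vertex covers of $G'$.

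Your proposal has two genuine gaps. First, the exhaustiveness argument for the branching is wrong as stated: you pick $v$ simplicial in $G[F]$, so only $K=N_G(v)\cap F$ is guaranteed to be a clique, but $v$ may also have neighbours in $S$. In the case $\{v\}\cup K\subseteq U$ the $U$-neighbourhood of $v$ is $K\cup(N_G(v)\cap S)$, which need not be a clique, so $v$ can be a cut vertex of $G[U]$ and you cannot discard this case. (This is repairable by choosing $v$ simplicial in $G[S\cup F]$ instead and treating $K=\emptyset$ as a forced discard of $v$.)

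Second, and more seriously, the per-sub-leaf bound of $3^{f/3}$ is not proved; you only offer a plan involving ``triangle-like blocks'' at a point where $F$ is independent, so there are no such blocks in $G[F]$. This sub-leaf step is precisely where the general argument of Theorem~\ref{thm:gen} loses control and produces $1.8668^n$, and even for chordality at most $5$ the paper needs a second, genuinely different branching phase on the contracted bipartite graph (Theorem~\ref{thm:chord-5}) to tame it, obtaining only $1.6181^n$. There is no reason given why chordality $\le 3$ makes the naive sub-leaf enumeration collapse to $3^{f/3}$; the paper's chordal argument avoids this phase entirely via the structural bijection above. Without an actual proof of the sub-leaf bound your argument does not establish the theorem.
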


\begin{proof}
Let $G$ be a chordal graph. If $G$ has no edges, then the claim is trivial. Notice also that the removal of an isolated vertex does not influence connected vertex covers, and if $G$ has two components with at least one edge each, $G$ has no connected vertex cover. Hence, without loss of generality we can assume that $G$ is a connected graph and $n\geq 2$.  

Let $S$ be the set of cut vertices of $G$ and $G'=G-S$. We claim that $U\subseteq V(G)$ is a minimal connected vertex cover of $G$ if and only if $S\subseteq U$ and $X=U\cap V(G')$ is a minimal vertex cover of $G'$. 

Let $X$ be a vertex cover of $G'$. We show that $U=S\cup X$ is a connected vertex cover of $G$. Because $X$ is a vertex cover of $G'$ and $S$ covers the edges of $E(G)\setminus E(G')$, $U$ is a vertex cover of $G$. To show that $G[U]$ is connected, assume for the sake of contradiction that it is not so. Let $H_1$ and $H_2$ be distinct components of $G[U]$ at minimum distance from each other. Let $P=v_0\ldots v_k$ be a shortest path in $G$ that joins a vertex of $H_1$ with a vertex of $H_2$. For $i\in\{1,\ldots,k\}$,  $v_{i-1}\in U$ or $v_i\in U$, because $U$ is a vertex cover of $G$. Since $H_1$ and $H_2$ are chosen to be components at minimum distance, $k=2$. Since $v_1\notin U$, $v_1$ is not a cut vertex of $G$. Therefore, $G-v_1$ has  a shortest $(v_0,v_2)$-path $P'=u_0\ldots u_s$. Because $P'$ is an induced path and $v_0v_2\notin E(G)$, $v_1u_i\in E(G)$ for $i\in\{1,\ldots,s-1\}$ by chordality. As $v_1\notin U$, $u_i\in U$ for $i\in\{1,\ldots,u_{s-1}\}$. Therefore, $V(P')\subseteq U$ contradicting that 
$H_1$ and 
$H_2$ are distinct components of $G[U]$. 
Since $U$ is a vertex cover and $G[U]$ is connected, $U$ is a connected vertex cover of $G$. 

Let $U$ be a connected vertex cover of $G$. 
By Observation~\ref{obs:cuts}, $S\subseteq U$. 
As the vertices of $S$ cover only the edges of $E(G)\setminus E(G')$, $X=U\setminus S$ is a vertex cover of $G'$.

We proved that $U\subseteq V(G)$ is a connected vertex cover of $G$ if and only if $S\subseteq U$ and $X=U\cap V(G')$ is a vertex cover of $G'$. This implies that $U$ is a minimal connected vertex cover of $G$ if and only if $S\subseteq U$ and $X=U\cap V(G')$ is a minimal vertex cover of $G'$. 

Since $G'$ has at most $3^{n/3}$ minimal vertex covers by Theorem~\ref{thm:vc}, $G$ has at most $3^{n/3}$ minimal connected vertex covers. 
Because $S$ can be found and $G'$ can be constructed in polynomial time, the minimal connected vertex covers of $G$ can be enumerated in time $O^*(3^{n/3})$ using the algorithms of e.g., \cite{TsukiyamaIAS77,JohnsonP88} as mentioned in the preliminaries.
\qed
\end{proof}

Proposition~\ref{prop:lower} shows that the upper bound is tight. 
Now we consider graphs of chordality at most 5. First a definition: a vertex in a graph is \emph{weakly simplicial} if its neighborhood is an independent set and the neighborhoods of its neighbors form a chain under inclusion.

\begin{lemma}[\cite{Pelsmajer04newproofs}]\label{lem:weakly-simp}
A graph is chordal bipartite if and only if every induced subgraph of it has a weakly
simplicial  vertex.  Furthermore,  a  nontrivial  chordal  bipartite  graph  has  a  weakly  simplicial
vertex in each partite set.
\end{lemma}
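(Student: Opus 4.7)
The plan is to prove the two directions separately. The direction $(\Leftarrow)$ I would handle by contrapositive. If $G$ contains an induced odd cycle $C_{2k+1}$ with $k\ge 1$, take the cycle itself as the induced subgraph: for $k=1$ the neighborhood of any vertex is an edge (not independent), while for $k\ge 2$ the two neighbors of any vertex have $2$-element neighborhoods sharing only that vertex, hence incomparable. An induced $C_{2k}$ with $k\ge 3$ fails the chain condition by the same argument. So ``every induced subgraph has a weakly simplicial vertex'' forces $G$ to be bipartite and to contain no induced $C_{2k}$ for $k\ge 3$, i.e., chordal bipartite.

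For the forward direction I would induct on $|V(G)|$ and aim for the stronger ``each partite set'' refinement. Let $G$ be a nontrivial chordal bipartite graph with bipartition $(A,B)$; focus on finding a weakly simplicial vertex in $A$. Pick $v \in A$ whose neighborhood $N(v)$ is inclusion-minimal among $\{N(a) : a \in A\}$. Bipartiteness makes $N(v) \subseteq B$ automatically independent, so only the chain condition on $\{N(u) : u \in N(v)\}$ has to be verified. Assume for contradiction that two neighbors $u_1, u_2 \in N(v)$ have incomparable neighborhoods, witnessed by $w_1 \in N(u_1)\setminus N(u_2)$ and $w_2 \in N(u_2)\setminus N(u_1)$; then $w_1, w_2 \in A$ are distinct from each other and from $v$. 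The relation $N(w_i)=N(v)$ would force $u_{3-i} \in N(w_i)$, contradicting the choice of $w_i$, so inclusion-minimality of $N(v)$ provides a vertex $z_i \in N(w_i)\setminus N(v)$ for each $i\in\{1,2\}$.

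The main obstacle is to convert this configuration into a forbidden induced cycle. In the favorable case $z_1 = z_2$, or $z_1 \sim w_2$, or $z_2 \sim w_1$, the vertices $v,u_1,w_1,z_1,w_2,u_2$ induce a $C_6$: every possible chord is ruled out either by the bipartition (no edges inside $A$ or inside $B$) or by the choice of the witnesses $w_1, w_2, z_1$. This contradicts chordality at most $4$. The harder sub-case is $z_1 \ne z_2$ together with $z_1 \not\sim w_2$ and $z_2 \not\sim w_1$; here I would strengthen the choice of $v$ from inclusion-minimality to the last row of a doubly lexical ordering of the bipartite adjacency matrix of $G$. In a chordal bipartite graph such an ordering produces a $\Gamma$-free matrix, so the last row automatically corresponds to a vertex whose neighbors' neighborhoods form a chain. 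Once a weakly simplicial vertex $v$ has been identified, the induction hypothesis applied to the chordal bipartite graph $G-v$ (the class is hereditary) furnishes the required weakly simplicial vertices in all induced subgraphs, completing the argument.
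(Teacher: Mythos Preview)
The paper does not prove this lemma; it is quoted from \cite{Pelsmajer04newproofs} and used as a black box. So there is no in-paper argument to compare against, and I assess your proposal on its own merits.

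Your $(\Leftarrow)$ direction is fine. The $(\Rightarrow)$ direction, however, has a genuine gap: choosing $v\in A$ with an inclusion-minimal neighbourhood does \emph{not} force $v$ to be weakly simplicial in a chordal bipartite graph. Take the path $z_1\,w_1\,u_1\,v\,u_2\,w_2\,z_2$ with bipartition $A=\{w_1,v,w_2\}$ and $B=\{z_1,u_1,u_2,z_2\}$. The sets $N(w_1)=\{z_1,u_1\}$, $N(v)=\{u_1,u_2\}$, $N(w_2)=\{u_2,z_2\}$ are pairwise incomparable, so $N(v)$ is inclusion-minimal in $A$; yet $N(u_1)=\{w_1,v\}$ and $N(u_2)=\{v,w_2\}$ are incomparable, so $v$ is not weakly simplicial. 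This configuration is exactly your ``harder sub-case'' ($z_1\neq z_2$, $z_1\not\sim w_2$, $z_2\not\sim w_1$), and since the graph is a tree there is no induced cycle of length $\geq 6$ to extract---the hoped-for contradiction simply does not arise. (The lemma still holds here: $w_1$ and $w_2$ are weakly simplicial in $A$. The point is that your selection rule can pick the wrong inclusion-minimal vertex.)

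You implicitly concede this by proposing, for the hard sub-case, to abandon the inclusion-minimal choice and instead take $v$ as the last row of a doubly lexical ordering, invoking the $\Gamma$-free matrix characterisation of chordal bipartite graphs. That is indeed the standard route to this lemma, but as written you are only naming it, not carrying it out: you neither argue that a $\Gamma$-free ordering exists for chordal bipartite graphs nor show why the last row then satisfies the chain condition. If you go that way, the entire inclusion-minimal discussion becomes superfluous and should be replaced by a self-contained $\Gamma$-free argument. As the proposal stands, the forward direction is not proved.
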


\begin{observation}\label{obs:contr}
If $e$ is an edge of a graph $G$, then $\chord(G/e)\leq \chord(G)$.
\end{observation}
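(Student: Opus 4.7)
The plan is to show that every induced cycle in $G/e$ can be lifted to an induced cycle of $G$ of length at least as large. Write $e = uv$ and let $w$ denote the new vertex of $G/e$, whose neighborhood is $(N_G(u) \cup N_G(v)) \setminus \{u,v\}$. It suffices to prove that if $C$ is an induced cycle of length $k \geq 3$ in $G/e$, then $G$ contains an induced cycle of length at least $k$; applying this to a longest induced cycle of $G/e$ yields the conclusion.

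I would split into two cases according to whether $w \in V(C)$. If $w \notin V(C)$, then $V(C) \subseteq V(G) \setminus \{u,v\}$, and the edges and non-edges of $C$ in $G$ coincide with those in $G/e$, so $C$ is already an induced cycle of length $k$ in $G$.

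The interesting case is $w \in V(C)$. Write $C = w\, x_1 x_2 \cdots x_{k-1}\, w$. The edges $x_i x_{i+1}$ for $1 \leq i \leq k-2$ are edges of $G$ untouched by the contraction, while $w x_1$ and $w x_{k-1}$ lift to edges of $G$ from $x_1$ and $x_{k-1}$ to at least one of $u, v$. I then distinguish two subcases. First, if some $a \in \{u,v\}$ is adjacent in $G$ to both $x_1$ and $x_{k-1}$, I take $C' = a\, x_1 \cdots x_{k-1}\, a$, a cycle of length $k$ in $G$. Second, if no such common $a$ exists, then after relabeling $x_1$ is adjacent in $G$ only to $u$ (not to $v$) and $x_{k-1}$ only to $v$; I take $C' = u\, x_1 \cdots x_{k-1}\, v\, u$, a cycle of length $k+1$ in $G$.

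The only real work is checking that $C'$ has no chord. A chord $x_i x_j$ with $i,j$ non-consecutive would survive in $G/e$ and contradict that $C$ is induced there. A chord of the form $a x_j$ (or $u x_j$, $v x_j$ in the second subcase) with $x_j$ non-consecutive would yield an edge $w x_j$ in $G/e$, again contradicting that $C$ is induced; in the second subcase, the subcase hypothesis specifically rules out the potential chords $u x_{k-1}$ and $v x_1$. I expect this bookkeeping to be the main (minor) obstacle, but once it is verified, both subcases produce an induced cycle of length at least $k$ in $G$, which gives $\chord(G) \geq \chord(G/e)$.
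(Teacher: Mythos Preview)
Your argument is correct. The case split on whether the contracted vertex $w$ lies on the induced cycle, followed by the two subcases according to whether $x_1$ and $x_{k-1}$ share a common neighbour in $\{u,v\}$, is exactly the natural way to lift an induced cycle of $G/e$ back to $G$, and your chord analysis is complete: chords among the $x_i$'s would survive the contraction, chords from $u$ or $v$ to an interior $x_j$ would create a chord $wx_j$ in $G/e$, and the remaining potential chords $ux_{k-1}$, $vx_1$ in the second subcase are excluded by hypothesis. You might add a word for the trivial case $\chord(G/e)=0$ (no induced cycles in $G/e$), but this is immediate.

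The paper itself does not supply a proof of this observation; it is stated and used as a folklore fact in the proof of Theorem~\ref{thm:chord-5}. So there is no ``paper's own proof'' to compare against---your write-up simply fills in what the authors left implicit.
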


\begin{theorem}\label{thm:chord-5}
The maximum number of minimal connected vertex covers of a graph of chordality at most 5 is at most $1.6181^n$, and these can be enumerated in time $O(1.6181^n)$.
\end{theorem}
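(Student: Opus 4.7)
The plan is to modify the algorithm of Theorem~\ref{thm:gen} so that, for chordality-$5$ inputs, the entire analysis is governed by the $(1,2)$-branching with branching number $\alpha = (1+\sqrt{5})/2 \le 1.6181$. I keep the primary branching rule exactly as before: while there are two adjacent free vertices $u,v$, branch into (i) $S' = S \cup \{u\}$, $F' = F \setminus \{u\}$, or (ii) $S' = S \cup N_G(u)$, $F' = F \setminus N_G[u]$. By the analysis already carried out in Theorem~\ref{thm:gen} this produces at most $\alpha^h$ sub-leaves, where $h = n - |F|$. The new task is to bound, for every sub-leaf $(S,F)$, the number of minimal connected vertex covers of the form $S \cup X$ with $X \subseteq F$ by $\alpha^{|F|}$, so that multiplication with $\alpha^h$ yields the desired total bound $\alpha^n$.

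The key structural step is to reduce each sub-leaf to a chordal bipartite problem. At $(S,F)$ I build the auxiliary bipartite graph $B$ whose parts are $C = \{c_1, \ldots, c_s\}$ (one vertex per connected component of $G[S]$) and $F$, with $c_i v \in E(B)$ iff $v$ has a neighbor in $C_i$. Since every $v \in X$ has $N_G(v) \subseteq S$ and therefore must be a cut vertex of $G[S \cup X]$, the minimal connected vertex covers $S \cup X$ correspond bijectively to the minimal $X \subseteq F$ for which $B[C \cup X]$ is connected. Now $B$ is obtained from $G$ by first deleting the discarded set $V(G) \setminus (S \cup F)$ (an induced-subgraph operation, which does not increase chordality) and then contracting each $C_i$ to a single vertex (iterated edge contractions, preserving chordality by Observation~\ref{obs:contr}), so $\chord(B) \le \chord(G) \le 5$. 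But $B$ is bipartite, so any induced cycle in $B$ has even length, forcing $\chord(B) \le 4$ and making $B$ chordal bipartite.

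With this reduction in hand, I would enumerate the minimal $X$'s by a secondary branching algorithm guided by Lemma~\ref{lem:weakly-simp}: at each step pick a weakly simplicial vertex $w$ of the current (induced or contracted) $B$ and branch on $w$. If $w \in F$, branch on $w \in X$ versus $w \notin X$; the chain structure of the neighborhoods of $w$'s $C$-side neighbors ensures that the inclusion branch forces several equivalent $F$-vertices out of contention (so that $|F|$ decreases by at least $2$), while the exclusion branch simply removes $w$ (a decrease of $1$). If $w$ lies in $C$, its $F$-neighbors are linearly ordered by inclusion and a symmetric argument applies to the $F$-neighbor of smallest degree. The resulting $(1,2)$-style recurrence on $|F|$ yields at most $\alpha^{|F|}$ minimal $X$'s per sub-leaf, and Lemma~\ref{lem:upper-bound} combined with the primary bound $\alpha^h$ then delivers $\alpha^n \le 1.6181^n$. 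The main obstacle will be the precise case analysis justifying the $(1,2)$-decreases of the secondary branching; in particular one must show that in the inclusion branch the chain structure genuinely forces an extra $F$-vertex out of contention, which requires distinguishing whether the neighborhood chain of $w$ has a unique pendant and handling both sides of the bipartition of $B$.
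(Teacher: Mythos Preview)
Your reduction of each sub-leaf $(S,F)$ to a chordal bipartite auxiliary graph $B$ (components of $G[S]$ on one side, $F$ on the other) is exactly right and matches the paper; this is where the hypothesis $\chord(G)\le 5$ is used, via Observation~\ref{obs:contr}. The bijection between minimal connected vertex covers $S\cup X$ and minimal $X\subseteq F$ making $B[C\cup X]$ connected is also correct.

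The gap is precisely where you flag it: your secondary $(1,2)$ branching is not justified. For an $F$-side weakly simplicial vertex $w$ you assert that the inclusion branch ``forces several equivalent $F$-vertices out of contention'', but you do not say which ones or why; and for a $C$-side weakly simplicial vertex you appeal to a ``symmetric argument'' that is not symmetric (the roles of $C$ and $F$ in the problem are different: vertices of $F$ are the ones being selected or discarded, vertices of $C$ are not). As stated, neither case analysis is carried out, and the $(1,2)$ decrease is unproved.

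The paper avoids this difficulty by branching differently in the secondary phase. Lemma~\ref{lem:weakly-simp} guarantees a weakly simplicial vertex $u$ in the $C$-side specifically. Its $F$-neighbours $f_1,\ldots,f_t$ then have neighbourhoods in $C$ forming a chain under inclusion. The key observation (which your proposal is missing) is that \emph{exactly one} of $f_1,\ldots,f_t$ lies in any minimal $X$: at least one must be there to connect $u$, and if two are present then the one with the smaller neighbourhood is redundant and can be dropped, violating minimality. So the paper branches $t$ ways --- one branch per choice of the unique $f_i$ --- and in each branch all of $f_1,\ldots,f_t$ leave $F$. This gives branching vector $(t,t,\ldots,t)$ with branching number $t^{1/t}\le 3^{1/3}<\alpha$, and Lemma~\ref{lem:upper-bound} then yields $\alpha^n$ directly (the algorithm is written as a single recursion on the measure $|F|$, so no separate multiplication of phase-one and phase-two bounds is needed). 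Your binary $(1,\ge 2)$ scheme can in fact be completed along similar lines (branch on $f_1$ of smallest neighbourhood: if $f_1\in X$ then $f_2,\ldots,f_t\notin X$, giving $(t,1)$), but this is more work than simply using the ``exactly one'' fact.
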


\begin{proof}
We present a  branching algorithm  that we call $\textsc{EnumCVC-chord}(H,S,F)$, which takes as input an induced minor $H$ of $G$ and two disjoint sets $S,F\subseteq V(G)$ and outputs minimal connected vertex covers $U$ of $G$ such that $S\subseteq U\subseteq S\cup F$. We call $\textsc{EnumCVC-chord}(G,\emptyset,V(G))$ to enumerate minimal connected vertex covers of $G$.
As before, we say that $v\in V(G)$ is \emph{free} if $v\in F$ and $v$ is \emph{selected} if $v\in S$. The algorithm branches on a set of free vertices and either selects some of them to be included in a (potential) minimal connected vertex cover or discards some of them  by forbidding them to be selected. 

\medskip
\noindent
$\textsc{EnumCVC-chord}(H,S,F)$
\begin{enumerate}
\item If $S$ is a minimal connected vertex cover then return $S$ and stop. 
If $S$ is a connected vertex cover but not minimal then stop. 
\item If at least two distinct components of $G[S\cup F]$ contain vertices of $S$, then stop.
\item If there are two adjacent free vertices $u,v\in F$, then branch as follows:
\begin{itemize}
\item select $u$, i.e., set $S'=S\cup\{u\}$, $F'=F\setminus\{u\}$, and call \linebreak $\textsc{EnumCVC-chord}(H,S',F')$,
\item discard $u$ and select its neighbors, i.e., set $S'=S\cup N_G(u)$, $F'=F\setminus N_G[u]$, $H'=H-u$, and call  $\textsc{EnumCVC-chord}(H',S',F')$.
\end{itemize}
\item If $F$ is an independent set, then contract 
consecutively 
every edge $uv\in E(H)$ such that $u,v\notin F$ and denote by $H'$ the obtained graph. Find a weakly simplicial 
vertex $u\in V(H')\setminus F$. For each $v\in N_{H'}(u)$, select $v$ and discard $N_{H'}(u)\setminus \{v\}$, i.e., set $S'=S\cup\{v\}$, $F=F\setminus N_{H'}(u)$, $H''=H'-(N_{H'}(u)\setminus\{v\})$, and call  $\textsc{EnumCVC-chord}(H'',S',F')$.  
\end{enumerate}

To show that the algorithm is correct, consider a minimal connected vertex cover $U$ of $G$. Suppose that $S$ and $F$ are  disjoint subsets of $V(G)$, and $H$ is an induced minor of $G$ such that
\begin{itemize}
\item[i)] $S\subseteq U\subseteq S\cup F$,
\item[ii)] for every $v\in V(G)\setminus (S\cup F)$, $N_G(v)\subseteq S$, and
\item[iii] $H$ is obtained from $G$ by deleting vertices of $V(G)\setminus (S\cup F)$ and by contracting some edges $uv$ such that $u,v\in S$. 
\end{itemize}
If $F=\emptyset$, then $U=S$ and the algorithm outputs $U$ on Step~1.
Assume inductively that  $\textsc{EnumCVC-chord}(H',S',F')$ outputs $U$ for any disjoint $S',F'$ and $H'$ satisfying i)--iii) if  $|F'|<|F|$.

Clearly, if $S$ is a connected vertex cover of $G$, then $U=S$ by minimality, and $U$ is returned by the algorithm on Step~1. 
Since $U\subseteq S\cup F$, and $U$ is a connected set in $G$, all the vertices of $S$ are in the same component of $G[S\cup F]$ and the algorithm does not stop at Step~2.

To argue the correctness of Step~3, suppose that there are two adjacent free vertices $u,v\in F$. Then $u\in U$, or $u\notin U$ and $N_G(u)\subseteq U$, because at least one endpoint of every edge is in $U$. In the first case we have that $S'\subseteq U\subseteq S'\cup F'$, where  $S'=S\cup\{u\}$ and $F'=F\setminus\{u\}$, and then we call
 $\textsc{EnumCVC-chord}(H,S',F')$. By induction, the algorithm outputs $U$.  
In the second case,  $S'\subseteq U\subseteq S'\cup F'$, where  $S'=S\cup N_G(u)$ and  $F'=F\setminus N_G[u]$. Also $H'=H-u$; notice that $H'$ is obtained from $H$ by the deletion of a vertex of $V(G)\setminus (S'\cup F')$ and that all neighbors of $u$ are in $S'$. 
Then we call $\textsc{EnumCVC-chord}(H',S',F')$. Again, by induction, the algorithm outputs $U$.  

To consider Step~4, suppose that $F$ is an independent set and $S$ is not a connected vertex cover of $G$. 
Observe that because of ii), $S$ is a vertex cover of $G$, and the vertices of $U\setminus S\subseteq F$ are used to ensure connectivity.
Recall that the graph $H'$ is obtained by contracting edges $uv\in E(H)$ such that 
$u,v\notin F$.  This means that $V(H')\setminus F$ is an independent set, and we have that each vertex of $X=V(H')\setminus F$ is obtained by contracting a component of $G[S]$. 
For each $x\in X$, denote by $W_x\subseteq S$ the set of vertices of the component of $G[S]$ that is contracted to $x$. 
Because the algorithm did not stop at Step~1, $S$ is not a connected vertex cover of $G$ and  $U\setminus S\neq\emptyset$. 
Hence, $F\neq \emptyset$. 
We have that $H'$ is a bipartite graph such that $X,F$ is the bipartition of $V(H')$. 
By Observation~\ref{obs:contr}, $\chord(H')\leq \chord(G)\leq 5$. As $H'$ is bipartite, $\chord(H')\leq 4$, i.e., $H'$ is a chordal bipartite graph.
Notice that because $G[S]$ is disconnected, $|X|\geq 2$.  Because the algorithm did not stop at Step~2, all the vertices of $S$ are in the same component of $G[S\cup F]$. Therefore, $d_{H'}(x)\geq 1$ for $x\in X$. By Lemma~\ref{lem:weakly-simp}, there is a weakly simplicial vertex $u\in X$, and we have that $N_{H'}(u)\neq\emptyset$.

We show that $|N_{H'}(u)\cap U|=1$. Because $U$ is a connected set of $G$ and $G[S]$ is disconnected, $F$ has a vertex that is adjacent to a vertex of the component $G[W_u]$ of $G[S]$. Hence, $N_{H'}(u)\cap U\neq\emptyset$. Since $u$ is a weakly simplicial vertex of $H'$, the neighborhoods of the vertices of $N_{H'}(u)$ form a chain under inclusion. Let $v\in N_{H'}(u)\cap U$ be a vertex with the inclusion maximal neighborhood. Suppose that $(N_{H'}(u)\cap U)\setminus\{v\}\neq\emptyset$ and $w\in (N_{H'}(u)\cap U)\setminus\{v\}$. 
By the choice of $v$, $N_{H'}(w)\subseteq N_{H'}(v)$. Hence, if $w$ is adjacent in $G$ to a vertex of $W_x$ for some $x\in F$, then $v$ is also adjacent to a vertex of $W_x$. 
Because $U$ is a connected set of $G$, we obtain that $U'=U\setminus\{w\}$ is also a connected set. Since $S\subseteq U'$, $U'$ is a vertex cover of $G$, i.e., $U'$ is a connected vertex cover of $G$, but this contradicts the minimality of $U$. Therefore, $N_{H'}(u)\cap U=\{v\}$.

Let $v$ be the unique vertex of $N_{H'}(u)\cap U$. On Step~4 we branch on $v$. We set $S'=S\cup\{v\}$, $F=F\setminus N_{H'}(u)$, and $H''=H'-(N_{H'}(u)\setminus\{v\})$, and we call  $\textsc{EnumCVC-chord}(H'',S',F')$. It remains to observe that the algorithm outputs $U$ for this call by induction.

To complete the correctness proof, it remains to notice that if $S=\emptyset$ and $F=V(G)$, then $S\subseteq U\subseteq S\cup F$ and also $H=G$ is an induced minor of $G$. 
Clearly, i)--iii) are fulfilled for these $S$, $F$ and $H$.
Hence, $\textsc{EnumCVC-chord}(G,\emptyset,V(G))$ outputs $U$. As $U$ is an  arbitrary minimal connected vertex cover, the algorithm outputs all minimal connected vertex covers. 

To obtain the upper bound on the number of minimal connected vertex covers of a graph $G$
of chordality at most 5, it is sufficient to upper bound the number of leaves of the search tree produced by the algorithm.
Recall that $\textsc{EnumCVC-chord}(H,S,F)$ takes as the input a triple $(H,S,F)$. We use $k=|F|$ as the measure of an instance. 
Let $L(k)$ be the maximum number of leaves in the search tree of the algorithm on an instance of size $k$
The algorithm $\textsc{EnumCVC-chord}(H,S,F)$ branches on Steps~3 and 4. 
In Step~3 the algorithm is called recursively for $|F'|=k-1$ on the first branch and for $|F'|\leq k-2$ on the second one. Due to the decrease of the number of free vertices, the branching vector
is $(1,2)$, whose branching number is $\alpha \leq 1.6181$. To analyze the branching in Step~4 let $t=d_{H'}(u)$. Then the algorithm has $t$ branches and in each the new instance has $|F'|=k-t$ free vertices. Hence the branching vector is $(t,t,\ldots ,t)$ with $t\ge 1$ entries and the branching number is $\beta_t=t^{1/t}$.
Notice that if $F=\emptyset$, the algorithm reaches its leaf and we have that $L(0)=1$.
Because on Steps~3 and 4 we never obtain a subproblem with negative measure, we can assume that $L(k)=0$ if $k<0$ and 
$$L(k)\leq\max\{1,L(k-1)+L(k-2),\max\{t\cdot L(k-t)\mid t\geq 1\}\}$$
for $k\geq 1$.
Then  $L(k)\leq \alpha^k$ by Lemma~\ref{lem:recurrences}, because
$\alpha\geq \beta_t$ for $t\geq 1$. 
As $k\leq n$, we have that the number of leaves is at most $\alpha^n$ by Lemma~\ref{lem:upper-bound} and, therefore, $G$ has at most $\alpha^n\leq 1.6181^n$ minimal connected vertex covers.

Because  each step of $\textsc{EnumCVC-chord}$ can be done in polynomial time, the bound on the number of leaves of the search tree  immediately implies that the algorithm runs in time $O(1.6181^n)$.
\qed
\end{proof}

\section{Distance-hereditary graphs}\label{sec:d-h}

Another graph class for which we are able to give a tight upper bound on the maximum number of minimal connected vertex covers, is the class of distance-hereditary graphs. 
First, we need some additional notations.
Let $G$ be a connected graph and $u\in V(G)$. We denote the levels of the breadth-first search (BFS) of $G$ starting at $u$ by $L_0(u),\ldots,L_{s(u)}(u)$. Hence for all 
$i\in \{0,\ldots ,s(u)\}$,
$L_i(u)=\{v\in V(G)\mid\dist_G(u,v)=i\}$. 
Clearly, the number of levels in this decomposition is $s(u)+1$. For $i\in\{1,\ldots,s(u)\}$, we denote by $\mathcal{G}_i(u)$ the set of components of $G[L_i(u)\cup\ldots \cup L_{s(u)}(u)]$, and $\mathcal{G}(u)=\cup_{i=1}^{s(u)}\mathcal{G}_i(u)$.
Let $H\in \mathcal{G}_i(u)$ and $B=N_G(V(H))$. Clearly, $B\subseteq L_{i-1}(u)$. We say that $B$ is the \emph{boundary of $H$ (in $L_{i-1}(u)$)}. We also say that $I=L_i(u)\cap V(H)$ is the \emph{interface} of $H$ (in $L_i(u)$).
For $i\in\{0,\ldots,s(u)-1\}$, $\mathcal{B}_i(u)$ is the set of boundaries in $L_i(u)$ of the graphs of $\mathcal{G}_{i+1}(u)$ and $\mathcal{B}(u)=\cup_{i=0}^{s(u)-1}\mathcal{B}_i(u)$.
We will use the following result due to Bandelt and Mulder~\cite{BandeltM86}, and D'Atri and Moscarini~\cite{DAtriM88}. 

\begin{lemma}[\cite{BandeltM86,DAtriM88}]\label{lem:d-h}
A connected graph $G$ is distance-hereditary if and only if for every vertex $u\in V(G)$ and every $H\in\mathcal{G}(u)$ with boundary $B$, the following holds: $N_G(u)\cap V(H)=N_G(v)\cap V(H)$ for $u,v\in B$.
\end{lemma}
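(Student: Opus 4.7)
My plan is to prove both implications by contradiction, using the convenient reformulation that $G$ is distance-hereditary if and only if every induced path of $G$ is a shortest path (this follows immediately from the definition, because a shortest path in any connected induced subgraph is itself induced in $G$). Both directions exploit the BFS level structure from Section~\ref{sec:defs}.

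For the backward direction, assume the neighborhood condition holds at every root of $G$ and suppose some induced path $Q = q_0 q_1 \cdots q_\ell$ of minimum length $\ell \geq 3$ fails to be shortest, with shortcut distance $d := \dist_G(q_0, q_\ell) < \ell$. By minimality of $\ell$ every proper sub-path of $Q$ is shortest, so BFS from $q_0$ places $q_i$ in $L_i(q_0)$ for each $i \leq \ell - 1$; the edge $q_{\ell-1} q_\ell$ combined with $Q$ induced forces $d \geq 2$. Let $H \in \mathcal{G}_d(q_0)$ be the component containing $q_\ell$ with boundary $B \subseteq L_{d-1}(q_0)$. Both $q_{d-1}$ and the penultimate vertex $q^\star$ of a shortest $(q_0, q_\ell)$-path lie in $B$, since each has a neighbor in $V(H)$ at level $d$. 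The hypothesis applied to $\{q_{d-1}, q^\star\} \subseteq B$ forces $q_\ell \in N_G(q^\star) \cap V(H) = N_G(q_{d-1}) \cap V(H)$, producing the edge $q_{d-1} q_\ell \in E(G)$; since $\ell - (d-1) \geq 2$, this is a chord of $Q$, contradicting the assumption that $Q$ is induced.

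For the forward direction, fix $G$ distance-hereditary, a BFS root $r$, a component $H \in \mathcal{G}_i(r)$ with boundary $B$ and interface $I = V(H) \cap L_i(r)$, and two boundary vertices $u, v \in B$. A level-counting observation shows that every $w \in N_G(u) \cap V(H)$ lies in $I$, since $i \leq \dist_G(r, w) \leq \dist_G(r, u) + 1 = i$. Assuming for contradiction that some such $w$ has $vw \notin E(G)$, I would fix a shortest $(v, w)$-path $P = v = p_0, \ldots, p_m = w$ inside $G[V(H) \cup \{v\}]$; then $m \geq 2$, and distance-heredity applied to $G[V(P)]$ gives $\dist_G(v, w) = m$. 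I would then build the cycle $C$ by concatenating an induced shortest $(r, u)$-path $Q_1$, the edge $uw$, the reverse of $P$, and an induced shortest $(r, v)$-path $Q_2$ reversed, choosing $Q_1, Q_2$ with the longest possible shared prefix ending at some $r_t$ with $t \leq i - 2$. The BFS layering confines $V(Q_1 \cup Q_2)$ to levels $\leq i-1$ and the interior of $P$ to levels $\geq i$, and the identity $\dist_G(v, w) = m$ rules out any chord $u p_k$ with $k \leq m - 3$. The remaining possible chords are $uv$, $u p_{m-2}$, $u p_{m-1}$, and level-adjacent edges between the diverged parts of $Q_1$ and $Q_2$. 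Chordlessness yields an induced cycle of length $2(i-1-t) + m + 1 \geq 5$, which is forbidden; the chord $uv$ combined with $t = i-2$ yields an induced five-vertex house on $\{r_{i-2}, u, v, p_1, w\}$, also forbidden; and the remaining configurations either collapse $C$ to a smaller induced hole of length $\geq 5$ or reduce to a nested instance handled by induction on $i - t$. Managing this case enumeration, together with the technical step of choosing $Q_1, Q_2$ to be otherwise disjoint after divergence, is the step I expect to be the main obstacle.
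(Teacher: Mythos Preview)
The paper does not prove this lemma: it is quoted as a known characterization of distance-hereditary graphs, attributed to Bandelt--Mulder and D'Atri--Moscarini, and used as a black box in the subsequent Lemmas~\ref{lem:nested} and~\ref{lem:twins}. There is therefore no in-paper argument to compare your attempt against.

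On the merits of what you wrote: the backward direction is correct and cleanly executed. Minimality of the counterexample path $Q$ pins $q_i$ to level $L_i(q_0)$ for $i\le\ell-1$; since $d<\ell$ the subpath $q_d\cdots q_{\ell-1}q_\ell$ stays in levels $\ge d$, so $q_d\in V(H)$ and hence $q_{d-1}\in B$; the boundary hypothesis then forces the chord $q_{d-1}q_\ell$, and $\ell-(d-1)\ge 2$ guarantees it really is a chord.

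Your forward direction, however, is only a sketch, and you acknowledge as much. The cycle-construction strategy is reasonable, and your chord inventory is essentially right once the levels are used to confine adjacencies, but the sentence ``the remaining configurations either collapse $C$ to a smaller induced hole of length $\ge 5$ or reduce to a nested instance handled by induction on $i-t$'' is precisely where the proof lives, and you have not supplied it. In particular you have not argued that $Q_1$ and $Q_2$ can actually be chosen internally disjoint after their last common vertex $r_t$, nor handled the simultaneous presence of the chords $uv$, $up_{m-1}$, $up_{m-2}$ when $m>2$ or $t<i-2$. If you want a self-contained argument rather than a citation, a more economical route is to use the forbidden-subgraph characterization (no induced house, hole, domino, or gem) and build one of these directly on $u$, $v$, $w$, a neighbor of $v$ in the interface $I$, and a common level-$(i-2)$ predecessor of $u$ and $v$; this localizes the case analysis to a bounded number of vertices and avoids the global cycle construction.
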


For the main result of this section, we need the following structural properties of distance-hereditary graphs.

\begin{lemma}\label{lem:nested}
Let $G$ be a connected distance-hereditary graph and $u\in V(G)$. Then for any $B_1,B_2\in\mathcal{B}(u)$, either $B_1\cap B_2=\emptyset$ or $B_1\subseteq B_2$ or $B_2\subseteq B_1$.
\end{lemma}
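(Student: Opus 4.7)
The plan has three stages. First, I dispose of the trivial cases: if $B_1 \in \mathcal{B}_i(u)$ and $B_2 \in \mathcal{B}_j(u)$ with $i \neq j$, then $B_1 \subseteq L_i(u)$ and $B_2 \subseteq L_j(u)$ lie in disjoint BFS levels and so $B_1 \cap B_2 = \emptyset$; and if $B_1, B_2 \in \mathcal{B}_i(u)$ are both boundaries of the same component $H \in \mathcal{G}_{i+1}(u)$, then $B_1 = B_2$. Hence the only non-trivial case is when $B_1, B_2 \in \mathcal{B}_i(u)$ are boundaries of distinct components $H_1, H_2 \in \mathcal{G}_{i+1}(u)$ with $B_1 \cap B_2 \neq \emptyset$, and I must show one of them is contained in the other.

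Second, I would argue by contradiction: assume both $w_1 \in B_1 \setminus B_2$ and $w_2 \in B_2 \setminus B_1$ exist, and fix $w \in B_1 \cap B_2$, $h_1 \in N_G(w) \cap V(H_1)$, $h_2 \in N_G(w) \cap V(H_2)$. Applying Lemma~\ref{lem:d-h} to $u$ and $H_1$ (resp.\ $H_2$) yields $N_G(w_1) \cap V(H_1) = N_G(w) \cap V(H_1) \ni h_1$ and analogously $w_2 h_2 \in E(G)$; meanwhile $w_1 h_2, w_2 h_1 \notin E(G)$ by definition, and $h_1 h_2 \notin E(G)$ since $H_1, H_2$ are distinct components of $G[L_{i+1}(u) \cup \ldots \cup L_{s(u)}(u)]$. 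So the subgraph induced by $\{w_1, h_1, w, h_2, w_2\}$ contains the path $w_1 - h_1 - w - h_2 - w_2$ with no additional edges except possibly among $\{w, w_1, w_2\}$. A key preparatory step is a second application of Lemma~\ref{lem:d-h}, now with base vertex $h_1$ and the component $\hat H$ of the subgraph induced by $\{v \in V(G) : \dist_G(h_1, v) \geq 2\}$ that contains $h_2$: one verifies that $V(H_2) \cup (B_2 \setminus B_1) \subseteq V(\hat H)$ and that $w$ lies in the boundary of $\hat H$, so the common-neighborhood property forces every boundary vertex of $\hat H$ to be adjacent to $h_2$. If $w_1$ were in this boundary we would obtain $w_1 h_2 \in E(G)$, a contradiction; hence $w_1$ has no neighbor at all in $V(\hat H)$, and in particular $w_1 w_2 \notin E(G)$. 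Symmetrically, $w_2$ has no neighbor in $V(H_1) \cup (B_1 \setminus B_2)$.

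Third, I extract a contradiction from the distance-hereditary property, which is equivalent to the absence of induced cycles of length at least $5$. Taking vertices $p_1 \in L_{i-1}(u) \cap N_G(w_1)$ and $p_2 \in L_{i-1}(u) \cap N_G(w_2)$ (which exist since $w_1, w_2 \in L_i(u)$ and $G$ is connected), the plan is to close the path $w_1 - h_1 - w - h_2 - w_2$ into a cycle of length at least $5$ by attaching a shortest $w_1$-$w_2$ route through $L_{\leq i-1}(u)$ via $p_1$ and $p_2$; a case analysis on whether $w w_1$, $w w_2$ are edges and whether $p_1 = p_2$ then produces, in each configuration, a forbidden induced cycle of length $\geq 5$, using the distance-preservation in connected induced subgraphs to rule out chords. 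The main obstacle is exactly this case analysis: the presence of either of the edges $w w_1$ or $w w_2$ creates triangles inside the five-vertex configuration, so one must bring in further BFS vertices above level $i$ before the forbidden induced cycle appears.
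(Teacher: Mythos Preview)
Your first two stages are sound: the reduction to boundaries in the same level is correct, and the second BFS from $h_1$ does show $w_1w_2\notin E(G)$. The trouble is Stage~3. First, a factual slip: distance-hereditariness is \emph{not} equivalent to the absence of induced cycles of length at least~$5$; the house, gem, and domino are additional forbidden induced subgraphs (indeed the proof of Lemma~\ref{lem:twins} in this very paper produces a house). So even a successful case analysis might hand you one of those rather than a long hole, and you would need to recognize it. Second, and more seriously, the case analysis is left unfinished: you explicitly flag the obstacle when $ww_1$ or $ww_2$ is an edge and do not resolve it. As stated, this is a plan, not a proof.

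The paper avoids all of this by using the \emph{definition} of distance-hereditary directly rather than any forbidden-subgraph characterization. Pick $x\in I_1$ and $y\in I_2$; since every interface vertex is adjacent to every boundary vertex of its component (by Lemma~\ref{lem:d-h}), $\dist_G(x,y)=2$ via any vertex of $B_1\cap B_2$. Now set $G'=G-(B_1\cap B_2)$. As $B_1\setminus B_2$ and $B_2\setminus B_1$ are nonempty, both $x$ and $y$ remain connected to $u$ in $G'$, hence to each other. But every common neighbour of $x$ and $y$ in $G$ must lie in $L_i(u)$ and be adjacent to both $H_1$ and $H_2$, hence in $B_1\cap B_2$; so $\dist_{G'}(x,y)\geq 3$. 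This already contradicts distance-heredity in the component of $G'$ containing $x$ and $y$, and the proof is four lines long. Your detour to establish $w_1w_2\notin E(G)$ and the entire cycle-hunt of Stage~3 are unnecessary once you argue with distances instead of forbidden subgraphs.
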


\begin{proof}
Assume for contradiction that there are $B_1,B_2\in\mathcal{B}(u)$ such that $B_1\cap B_2\neq\emptyset$ but neither $B_1\subseteq B_2$ nor $B_2\subseteq B_1$. 
Let $B_1$ and $B_2$ be the boundaries of $H_1$ and $H_2$ respectively, and let $I_1$ and $I_2$ be the interfaces of $H_1$ and $H_2$ respectively.
Let $x\in I_1$ and $y\in I_2$.
By Lemma~\ref{lem:d-h}, $H_1\neq H_2$ and $\dist_G(x,y)=2$. Consider $G'=G-B_1\cap B_2$. Because $B_1\setminus B_2\neq\emptyset$ and $B_2\setminus B_1=\emptyset$, $x$ and $y$ are in the same component of $G'$, because $G'$ has paths that connect $u$ to  $x$ and $y$ respectively, but $\dist_{G'}(x,y)\geq 3$, which gives the desired contradiction.\qed
\end{proof}

\begin{lemma}\label{lem:twins}
Let $G$ be a connected distance-hereditary graph, $u\in V(G)$ and let $B$ be an inclusion minimal set of $\mathcal{B}(u)$. If $B$ is an independent set of $G$, then the vertices of $B$ are false twins.
\end{lemma}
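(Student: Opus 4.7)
The plan is to pick two distinct vertices $v_1,v_2\in B$ and prove $N_G(v_1)=N_G(v_2)$ by taking an arbitrary neighbour $w\ne v_2$ of $v_1$ and showing $w\in N_G(v_2)$, split into three cases according to the BFS-level of $w$. Write $B$ as the boundary of some $H\in\mathcal{G}_{i+1}(u)$; if $i=0$ then $B\subseteq L_0(u)=\{u\}$ is a singleton and there is nothing to prove, so I assume $i\ge 1$. Applying Lemma~\ref{lem:d-h} to $H$ gives a common neighbour $x\in V(H)$ of $v_1$ and $v_2$, and the hypothesis that $B$ is independent supplies $v_1v_2\notin E(G)$. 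Since $v_1\in L_i(u)$, any neighbour of $v_1$ lies in $L_{i-1}(u)\cup L_i(u)\cup L_{i+1}(u)$.

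For $w\in L_{i+1}(u)$, the vertex $w$ lies in some $H'\in\mathcal{G}_{i+1}(u)$ with boundary $B'$, and $v_1\in B\cap B'$. Lemma~\ref{lem:nested} and the minimality of $B$ force $B\subseteq B'$, so $v_2\in B'$, and a second application of Lemma~\ref{lem:d-h} yields $w\sim v_2$. For $w\in L_{i-1}(u)$, let $H^*\in\mathcal{G}_i(u)$ be the component containing $V(H)$, so $B\subseteq V(H^*)\cap L_i(u)$, and let $B^*$ be its boundary. Then $w\in B^*$. Every vertex of $V(H^*)\cap L_i(u)$ has a neighbour in $L_{i-1}(u)$, which must lie in $B^*$, and Lemma~\ref{lem:d-h} applied to $H^*$ forces $N_G(b)\cap V(H^*)$ to be independent of $b\in B^*$. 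These two facts combine to show that each vertex of $B^*$ is adjacent to \emph{every} vertex of $V(H^*)\cap L_i(u)$, and in particular $w\sim v_2$.

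I expect the case $w\in L_i(u)\setminus\{v_2\}$ to be the main obstacle, and I plan to handle it by contradiction: assume $w\not\sim v_2$. The key sub-claim is $w\not\sim x$; if instead $wx\in E(G)$ then $w$ would be adjacent to a vertex of $V(H)$ and so $w\in B$, but this together with $wv_1\in E(G)$ contradicts the independence of $B$. With $w\not\sim x$ in hand, the induced subgraph on $\{v_2,x,v_1,w\}$ is the path $v_2xv_1w$, so the distance-hereditary hypothesis forces $\dist_G(v_2,w)=3$. On the other hand, $w\in V(H^*)\cap L_i(u)$ since $w\sim v_1\in V(H^*)$, and the argument from the previous case shows that any $b\in B^*$ satisfies $b\sim v_2$ and $b\sim w$; as $v_2\in L_i(u)$ has at least one neighbour in $L_{i-1}(u)$, $B^*\ne\emptyset$, so $\dist_G(v_2,w)\le 2$. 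This contradiction closes the final case, and symmetry between $v_1$ and $v_2$ completes the proof.
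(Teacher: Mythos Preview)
Your proof is correct and follows the same overall structure as the paper's: both analyse an arbitrary neighbour of one vertex of $B$ by splitting into the three possible BFS levels, use Lemma~\ref{lem:nested} together with the minimality of $B$ for the $L_{i+1}$ case, and use the join between $B^*$ and the interface $I^*$ of the level-$i$ component $H^*$ for the $L_{i-1}$ case.

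The only genuine difference is in the critical $L_i$ case. The paper picks a common lower neighbour $v\in L_{i-1}$ of $v_1,v_2,w$ and a common upper neighbour in $I$ of $v_1,v_2$, observes that these five vertices induce a \emph{house}, and invokes the Bandelt--Mulder forbidden-subgraph characterisation of distance-hereditary graphs. You instead exhibit the induced $P_4$ $v_2xv_1w$, apply the defining distance-preservation property to get $\dist_G(v_2,w)=3$, and then contradict this via a common neighbour in $B^*$. The two arguments are equivalent --- your $P_4$ together with that common neighbour in $B^*$ \emph{is} exactly the house the paper finds --- but your version has the mild advantage of using only the definition of distance-hereditary, so it does not rely on the external forbidden-subgraph characterisation.
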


\begin{proof}
If $|B|=1$, then the claim is trivial. Assume that $|B|\geq 2$. Then $B\subseteq L_i(u)$ for some $i\in \{1,\ldots,s(u)-1\}$. 
Let $B$ be the boundary of $H\in \mathcal{G}_{i+1}(u)$ with the interface $I$.
Consider two distinct $x,y\in B$. 
To obtain a contradiction, assume that $N_G(x)\neq N_G(y)$ and $z\in N_G(x)\setminus N_G(y)$.
By Lemma~\ref{lem:d-h}, $N_G(x)\cap L_{i-1}(u)=N_G(y)\cap L_{i-1}(u)$. Because $B$ is inclusion minimal, by Lemmas~\ref{lem:d-h} and \ref{lem:nested}, $N_G(x)\cap L_{i+1}(u)=N_G(y)\cap L_{i+1}(u)$. Hence, $z\in L_i(u)$. By Lemma~\ref{lem:d-h}, there is $v\in L_{i-1}(u)$ such that $vx,vy,vz\in E(G)$ and there is $w\in I$ such that $xw,yw\in E(G)$. Because $B$ is an independent set, $z\notin B$ and $zw\notin E(G)$. It remains to observe that the set of vertices $\{v,x,y,x,w\}$ induces a subgraph that is known as the \emph{house}, which is a forbidden induced subgraph of distance-hereditary graphs~\cite{BandeltM86,DAtriM88}. Thus $G$ cannot be distance-hereditary, and the obtained contradiction proves the lemma.\qed
\end{proof}

\begin{observation}\label{obs:Knm}
Let $G$ be a graph, and let $X,Y\subseteq V(G)$ be disjoint sets such that every vertex of $X$ is adjacent to every vertex of $Y$. Then for every vertex cover $U$ of $G$, $X\subseteq U$ or $Y\subseteq U$. 
\end{observation}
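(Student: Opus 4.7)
The plan is to prove the statement by a direct contradiction argument, which is essentially the only natural approach given how close the conclusion lies to the definitions involved. Assume that neither $X \subseteq U$ nor $Y \subseteq U$. Then by definition of set inclusion, there exist vertices $x \in X \setminus U$ and $y \in Y \setminus U$.

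Since $X$ and $Y$ are disjoint and every vertex of $X$ is adjacent to every vertex of $Y$, in particular $xy$ is an edge of $G$. However, both endpoints $x$ and $y$ lie outside $U$, so the edge $xy$ has no endpoint in $U$, which contradicts the assumption that $U$ is a vertex cover of $G$.

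There is no real obstacle here, and no additional structural lemma from the earlier parts of the paper is needed: the statement is essentially the observation that a complete bipartite subgraph forces any vertex cover to contain one of its two sides, a basic fact used repeatedly in vertex-cover arguments. The only minor thing to be careful about is to note that the case $X = \emptyset$ or $Y = \emptyset$ is handled trivially since $\emptyset \subseteq U$ holds vacuously; in that case the hypothesis on edges between $X$ and $Y$ is vacuous as well, and the conclusion is immediate. Thus a two-line proof should suffice.
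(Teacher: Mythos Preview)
Your proof is correct and follows exactly the same approach as the paper's own proof: assume $x\in X\setminus U$ and $y\in Y\setminus U$ exist, and observe that the edge $xy$ is then uncovered. Your additional remark on the trivial case $X=\emptyset$ or $Y=\emptyset$ is a harmless extra detail not present in the paper.
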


\begin{proof}
Suppose that for $U\subseteq V(G)$, $x\in X\setminus U\neq\emptyset$ and $y\in Y\setminus U\neq\emptyset$. Then $U$ does not cover $xy$, i.e., is not a vertex cover of $G$.\qed 
\end{proof}

\begin{theorem}\label{thm:d-h}
The maximum number of minimal connected vertex covers of a distance-hereditary graph is at most $2 \cdot 3^{n/3}$, and these can be enumerated in time $O^*(3^{n/3})$.
\end{theorem}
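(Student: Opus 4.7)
The plan is to present a branching algorithm $\textsc{EnumCVC-dh}$, analogous to those in the proofs of Theorems~\ref{thm:chord} and~\ref{thm:chord-5}, which enumerates the minimal connected vertex covers of a distance-hereditary graph. The key structural fact I would first establish from Lemma~\ref{lem:d-h} is that for every $H \in \mathcal{G}(u)$ with boundary $B$ and interface $I$, the induced subgraph $G[B \cup I]$ contains the complete bipartite subgraph $K_{|B|,|I|}$: each $v \in I$ has some neighbor $b' \in B$ by the BFS structure, and Lemma~\ref{lem:d-h} then gives $v \in N_G(b) \cap V(H) = N_G(b') \cap V(H)$ for every $b \in B$. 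Combined with Observation~\ref{obs:Knm}, this yields that every connected vertex cover of $G$ contains either all of $B$ or all of $I$.

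The algorithm first handles trivial cases and reduces to the case where $G$ is connected. It then picks a vertex $u$, runs a BFS from $u$, and starts with a two-way branch on whether $u \in U$ or $u \notin U$; this initial branch is the source of the factor $2$ in the stated bound. In each subcall, the algorithm finds an inclusion-minimal boundary $B \in \mathcal{B}(u)$, which by Lemma~\ref{lem:nested} is well-behaved with respect to other boundaries. If $|B|=1$, its unique vertex is a cut vertex of $G$, is forced into $U$ by Observation~\ref{obs:cuts}, and the graph splits along it without branching. Otherwise $|B| \geq 2$, and either Lemma~\ref{lem:twins} applies (making the vertices of $B$ false twins of $G$ when $B$ is independent) or $B$ contains an edge that together with the complete bipartite structure forces $|I| \geq 2$; in both subcases the algorithm branches on $B \subseteq U$ versus $I \subseteq U$.

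For the analysis, I would use the measure $\mu = |F|$ as in Theorem~\ref{thm:chord-5} and appeal to Lemma~\ref{lem:upper-bound}. When $|B|, |I| \geq 2$, the branching vector $(|B|, |I|)$ has branching number at most $\sqrt{2} < 3^{1/3}$. In the remaining borderline configurations, the false-twin structure furnished by Lemma~\ref{lem:twins}, together with the fact that false twins in an inclusion-minimal boundary come with a complete bipartite attachment to the interface, allows several reduction steps to be chained into a single branching with branching vector at least $(3,3,3)$, whose branching number is exactly $3^{1/3}$. Summing over the two initial branches yields at most $2 \cdot 3^{n/3}$ leaves of the search tree and hence at most $2 \cdot 3^{n/3}$ minimal connected vertex covers; since each step is polynomial, the enumeration runs in $O^*(3^{n/3})$.

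The main obstacle is verifying the branching-number bound of $3^{1/3} \approx 1.4422$ in every branching step, which is considerably tighter than the golden-ratio bound used in Theorem~\ref{thm:chord-5}. I expect the case $|I| = 1$ (with $B$ small) to be the hardest: there the naive branching vector $(|B|, 1)$ has branching number well above $3^{1/3}$, and one must amortize by combining the branching with subsequent forced reductions. In particular, when $|I|=1$ the single interface vertex $v$ is pendant-like relative to $B$, and one must argue that selecting $v$ (rather than $B$) triggers immediate simplifications in $G - v$ by exposing further boundaries or cut vertices, so that the amortized branching vector respects the $3^{1/3}$ cap demanded by Lemma~\ref{lem:recurrences}.
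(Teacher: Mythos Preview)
Your overall framework (BFS from a root, inclusion-minimal boundaries, the complete bipartite structure between $B$ and $I$ via Lemma~\ref{lem:d-h} and Observation~\ref{obs:Knm}) matches the paper, but your central branching rule does not, and the gap you yourself flag in the $|I|=1$ case is real and unclosed. Two concrete problems. First, the claim that ``$B$ contains an edge \ldots\ forces $|I|\ge 2$'' is false: take $B=\{b_1,b_2\}$ with $b_1b_2\in E(G)$ and a single interface vertex adjacent to both; this triangle is distance-hereditary, so your case split is not exhaustive. Second, the two-way branch ``$B\subseteq U$ versus $I\subseteq U$'' with vector $(|B|,|I|)$ cannot meet the target $3^{1/3}$ when $|I|=1$, and your amortization sketch (``selecting $v$ triggers immediate simplifications'') is not an argument --- there is no reason the exposed structure after deleting a lone interface vertex should produce enough forced moves to repair a $(|B|,1)$ vector.

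The paper avoids this entirely by never branching on $B$ versus $I$. It uses Lemma~\ref{lem:twins} far more sharply: if $B$ is an independent inclusion-minimal boundary (hence a set of false twins) and all of $N_G(B)$ is already selected, then by minimality \emph{exactly one} vertex of $B$ lies in $U$, giving a $t$-way branch with vector $(t,\ldots,t)$ and branching number $t^{1/t}\le 3^{1/3}$; when part of $N_G(B)$ is still free, refined case analysis yields vectors $(s{+}1,\ldots,s{+}1)$ and $(t{+}2,\ldots,t{+}2,t)$, all still bounded by $3^{1/3}$. Just as crucially, the paper only ever branches on boundaries that are independent and not yet intersected by $S$; once no such boundary remains, it falls back to enumerating minimal vertex covers of $G[F]$ with the standard $3^{|F|/3}$ algorithm and proves that each such cover, joined to $S$, is automatically connected. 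Your plan lacks this fallback and attempts to branch all the way down on boundaries, which is precisely why the bad vectors appear. (A minor point: the paper's factor $2$ comes from fixing an edge $u_1u_2$ and running the whole procedure once from each endpoint; your ``$u\in U$ or $u\notin U$'' variant would need to re-root the BFS in the $u\notin U$ branch, since all the downstream connectivity arguments require the BFS root to lie in $S$.)
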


\begin{proof}
Let $G$ be a distance-hereditary graph. If $G$ has no edges, then the claim is trivial. Notice also that the removal of an isolated vertex does not influence connected vertex covers, and if $G$ has two components with at least one edge each, $G$ has no connected vertex cover. Hence, without loss of generality we can assume that $G$ is a connected graph and $n\geq 2$.  

Let $u\in V(G)$. For the main part of the proof, we give an algorithm for enumerating all minimal connected vertex covers of $G$ that contain $u$ and upper bound the number of such covers. At the end, we argue why this is sufficient. 

First we perform breadth-first search of $G$ starting at $u$ and construct $\mathcal{G}(u)$ and  $\mathcal{B}(u)$. 
We construct the set 
$\mathcal{G}'(u)\subseteq \cup_{i=2}^{s(u)}\mathcal{G}_i(u)$ 
that contains all 
$H\in \mathcal{G}(u)$ 
such that the boundary $B$ of $H$ is an inclusion minimal set of $\mathcal{B}(u)$. 

Next we give a branching algorithm that we call
$\textsc{EnumCVC-d-h}(R,S,F)$, 
which takes as input an induced subgraph $R$ of $G$ and two disjoint sets $S,F\subseteq V(G)$ such that $u\in S$,  and outputs minimal connected vertex covers $U$ of $G$ such that $S\subseteq U\subseteq S\cup F$. To enumerate all minimal connected vertex covers $U$ of $G$ such that $u\in U$, we call $\textsc{EnumCVC-d-h}(G,\{u\},V(G)\setminus\{u\})$. As before, we say that $v\in V(G)$ is free if $v\in F$ and $v$ is selected if $v\in S$. The algorithm branches on a subset of the free vertices and either selects some of them to be included in a (potential) minimal connected vertex cover or discards some of them  by forbidding them to be selected. 

\medskip
\noindent
$\textsc{EnumCVC-d-h}(R,S,F)$

\medskip
\noindent
{\bf Step~1.}  
If $S$ is a minimal connected vertex cover then return $S$ and stop.

\medskip
\noindent
{\bf Step~2.} If there is an $H\in\mathcal{G}'(u)$ with  boundary $B$ such that $B\cap S=\emptyset$ and $B$ is an independent set, then let $B'=B\cap V(R)$ and do the following.
 \begin{itemize}[leftmargin=8mm]
\item[2.1.] If $B'=\emptyset$, then stop.
\item[2.2.] If $|B'|=1$, then select the unique vertex of $B'$, i.e., set $S'=S\cup B'$, $F'=F\setminus B'$, and call $\textsc{EnumCVC-d-h}(R,S',F')$.
\item[2.3.] If $|B'|\geq 2$ and $N_R(B')\cap F=\emptyset$, then for each $v\in B'$ branch by selecting $v$ and discarding $B'\setminus\{v\}$, i.e., set 
$S'=S\cup \{v\}$, $F'=F\setminus B'$, $R'=R-(B'\setminus\{v\})$, and call $\textsc{EnumCVC-d-h}(R',S',F')$.
\item[2.4.] If $|B'|\geq 2$ and $|N_R(B')\cap F|=1$, then denote by $w$ the unique vertex of  $N_R(B')\cap F$ and branch:
\begin{itemize}
\item for each $v\in B'$, select the vertices $v,w$ and discard $B'\setminus\{v\}$, i.e., set 
$S'=S\cup \{v,w\}$, $F'=F\setminus (B'\cup\{w\})$, and 
$R'=R-(B'\setminus\{v\})$, then call $\textsc{EnumCVC-d-h}(R',S',F')$,
\item select the vertices of $N_R(w)\supseteq B\rq{}$ and discard $w$, i.e., set $S'=S\cup N_R(w)$, $F'=F\setminus N_R[w]$, $R'=R-\{w\}$, and call  $\textsc{EnumCVC-d-h}(R',S',F')$.
\end{itemize}
\item[2.5.] If   $|B'|\geq 2$ and   $|N_R(B')\cap F|\geq 2$, then branch:
\begin{itemize}
\item for each $v\in B'$, select the vertices of $\{v\}\cup (N_R(B')\cap F)$ and discard the 
vertices of $B'\setminus\{v\}$, i.e., set 
$S'=S\cup \{v\}\cup N_R(B')$, $F'=F\setminus N_R[B']$, $R'=R-(B'\setminus\{v\})$, and call $\textsc{EnumCVC-d-h}(R',S',F')$,
\item select all vertices of $B'$, i.e., set $S'=S\cup B'$, $F'=F\setminus B'$, and call  $\textsc{EnumCVC-d-h}(R,S',F')$.
\end{itemize}
\end{itemize}

\medskip
\noindent
{\bf Step~3.}  Let $R'=G[F]$. Then enumerate the minimal vertex covers of $R'$, and for each vertex cover $X$ of $R'$, output $S\cup X$ if $S\cup X$ is a minimal connected vertex cover of $G$.

\medskip

To prove the correctness, let $U$ be a minimal connected vertex cover of $G$ such that $u\in U$, and consider two disjoint subsets $S$ and $F$ of $V(G)$ and $R=G[S\cup F]$  such that
\begin{itemize}
\item[i)] $u\in S\subseteq U\subseteq S\cup F$,
\item[ii)] for any $v\in V(G)\setminus V(R)$, $N_G(v)\subseteq S$.
\end{itemize}
Notice that for $S=\{u\}$ and $F=V(G)\setminus\{u\}$, i) and ii) are fulfilled.  
Assume inductively that  $\textsc{EnumCVC-d-h}(R',S',F')$ outputs $U$ for any disjoint $S',F'\subseteq V(G)$ and $R'=G[S'\cup F']$ such that
i) and ii) are fulfilled for $S',F'$ and $|F'|<|F|$.

Clearly, if $S$ is a connected vertex cover of $G$, then $U=S$ by minimality and $U$ is returned by the algorithm on Step~1. 

Now we consider Step~2. Suppose that there is $H\in\mathcal{G}'(u)$ with the boundary $B$ such that $B\cap S=\emptyset$ and $B$ is an independent set. By Lemma~\ref{lem:twins}, the vertices of $B$ are false twins, i.e., each vertex of $B$ is adjacent to every vertex of $N_G(B)$ in $G$. By Observation~\ref{obs:Knm}, $B\subseteq U$ or $N_G(B)\subseteq U$.
Notice that if there is a $v\in N_G(B)\setminus N_R(B')$, then $B\subseteq N_G(v)\subseteq S$ by ii). Since, $B'\cap S=\emptyset$, we have that $N_G(B)=N_R(B')$. 

Since $B$ is a separator of $G$, $U\cap B\neq\emptyset$ by Observation~\ref{obs:cuts}. In particular, it means that the algorithm does not stop at Step~2.1 and $B'=B\cap V(R)\neq\emptyset$. 

If $|B'|=1$, then the unique vertex of $B'$ is in $U$, because  $U\cap B\neq\emptyset$. Clearly, i) and ii) are fulfilled for $S'=S\cup B'$, $F'=F\setminus B'$, and  $\textsc{EnumCVC-d-h}(R,S',F')$ outputs $U$ by induction on Step~2.2.

From now we may assume that $|B'|\geq 2$.

Suppose that $N_R(B')\cap F=\emptyset$.  
Recall that $N_R(B')=N_G(B)$. Hence, $N_G(B)\subseteq S$
and all the vertices incident to the vertices of $B$ are covered by $S$. Hence, a vertex of $B$ can be included in $U$ only to ensure connectivity. It implies that exactly one vertex $v\in B'$ is in $U$.  We have that for  $S'=S\cup \{v\}$ and $F'=F\setminus B'$ i) and ii) are fulfilled,  $R'=R-(B'\setminus\{v\})=G[S'\cup F']$, and $\textsc{EnumCVC-d-h}(R',S',F')$ outputs $U$ by induction on Step~2.3.

Suppose that $N_R(B')\cap F=\{w\}$. Because  $N_R(B')=N_G(B)$, $w$ is the unique vertex of $N_G(B)\cap F$ and $N_G(B)\setminus\{w\}\subseteq S$.
Recall that $B\subseteq U$ or $N_G(B)\subseteq U$.
Suppose that $N_G(B)\subseteq U$. Then $w\in U$. 
We have that at least one vertex of $B$ is in $U$. Notice that all the edges incident to the vertices of $B$ are covered by the vertices of $S$ and $w$. Hence, exactly one vertex $b\in B'$ is in $U$ to ensure connectivity by minimality.  We have that i) and ii) are fulfilled for 
$S'=S\cup \{v,w\}$ and $F'=F\setminus (B'\cup\{w\})$,   $R'=R-(B'\setminus\{v\})=G[S'\cup F']$, and $\textsc{EnumCVC-d-h}(R',S',F')$ outputs $U$ by induction.
Suppose that $B\subseteq U$ but $N_G(B)\not\subseteq U$. Since $N_G(B)\setminus\{w\}\subseteq S$, $w\notin U$. Observe that $B'\subseteq N_R(w)$.
We obtain  that  i) and ii) are fulfilled for  $S'=S\cup N_R(w)$, $F'=F\setminus N_R[w]$, $R'=R-w=G[S'\cup F']$, and  $\textsc{EnumCVC-d-h}(R',S',F')$ outputs $U$ by induction. We conclude that if $N_R(B')\cap F=\{w\}$, then the algorithm outputs $U$ on Step~2.4.

Finally let us consider the last case of Step~2. Assume that $|N_R(B')\cap F|\geq 2$. Recall that $N_G(B)=N_R(B')$ and $B\subseteq U$ or $N_G(B)\subseteq U$.
If $N_G(B)\subseteq U$, then by the same arguments as above, 
exactly one $v\in B'$ is in $U$. We have that i) and ii) are fulfilled for 
$S'=S\cup \{v\}\cup N_R(B')$ and $F'=F\setminus N_R[B']$, $R'=R-(B'\setminus\{v\})=G[S'\cup F']$, and $\textsc{EnumCVC-d-h}(R',S',F')$ outputs $U$ by induction. Assume that $B\subseteq U$. Then i) and ii) are fulfilled for $S'=S\cup B'$, $F'=F\setminus B'$, and  $\textsc{EnumCVC-d-h}(R,S',F')$ outputs $U$ by induction. Hence, the algorithm outputs $U$ at Step~2.5.

It remains to consider Step~3. Because $U$ is a vertex cover of $G$ and $S\subseteq U\subseteq S\cup F$, there is a minimal vertex cover $X$ of $G[F]$ such that $X\subseteq U$. We claim that $U'=S\cup X$ is a connected vertex cover. By ii), we have that $U'$ is a vertex cover of $G$. 

To show connectivity, we prove that for every $v\in U'$, there is a $(u,v)$-path in $G[U']$. Clearly, $v\in L_i$ for some $i\in\{0,\ldots,s(u)\}$. We prove the claim by induction on $i$. If $i=0$, then 
$u\in L_0$, and if $i=1$, then $v\in N_G(u)$. Suppose that $v\in L_i$ for $i\geq 2$. Then $v$ is in the interface $I$ of some $H\in \mathcal{G}_i(u)$. Let $B$ be the boundary of $H$.  Then there is an $H'\in \mathcal{G}'(u)$ with the boundary $B'\subseteq B$. If there is a $w\in B'\cap U$, then $vw\in E(G)$ and $G[U']$ has a $(u,w)$-path by induction.
Clearly, this path can be extended to $v$. Assume that $B'\cap U=\emptyset$. If $B'$ has two adjacent vertices then at least one of them is in the vertex cover $U'$. Hence $B'$ is an independent set.  Since  $B'\cap U=\emptyset$, $B'\cap S=\emptyset$. This contradict the fact that the algorithm goes to Step~3, as it should execute Step~2 if $B'\cap S=\emptyset$ and $B'$ is independent. 

We obtain that $U'$ is a connected vertex cover. By minimality, $U=U'$, i.e., the algorithm outputs $U$ on Step~3.

We proved that $\textsc{EnumCVC-d-h}(G,\{u\},V(G)\setminus\{u\})$ enumerates all minimal connected vertex covers of $G$ that include $u$.
To obtain the upper bound on the number of such minimal connected vertex covers of $G$, it is sufficient to upper bound the number of leaves of the search tree produced by the algorithm. We assume that $k=|F|$ is the measure of an instance $(H,S,F)$.
The algorithm $\textsc{EnumCVC-d-h}(H,S,F)$ branches on Steps~2.3--5 and we can assume that the enumeration of minimal connected vertex covers on Step~3 is done by the straightforward modification of branching algorithm \textsc{mis1} from \cite[Section~1.3]{FominK10}\footnote{The algorithm  \textsc{mis1} is used in \cite[Section~1.3]{FominK10} to find a maximum independent  set, but it can be easily modified to enumerate all maximal independent sets.}. 
Note that the 
branching vectors used in  algorithm  \textsc{mis1} are  $(t,t,\ldots t)$ with
$t\geq 1$ entries; denote the corresponding branching number $\alpha_t=t^{1/t}$. 
On Step~2.3 we have $h=|B'|$ branches, and for each branch $|F'|\leq |F|-h$, thus a branching vector of $(h,h,\ldots ,h)$ with $h\ge 2$ entries and the branching number is $\alpha_h$.
On Step~2.4 we have $s+1$ branches for $s=|B'|$, and for each branch $|F'|\leq |F|-(s+1)$,
hence a branching vector of $(s+1,s+1,\ldots ,s+1)$ with $s+1\ge 3$ entries and the corresponding branching numbers is $\alpha_{s+1}$. 
On Step~2.5, we have $t=|B'|\geq 2$ branches with $|F'|\leq |F|-t-2$ and one branch with $|F'|\leq |F|-t$. Hence we get the branching vector $(t+2,t+2,\ldots ,t+2,t)$ with $t+1\ge 3$ 
entries and the branching number $\beta_{t}$. 
Let $L(k)$ be the maximum number of leaves in the search tree of the algorithm on a graph $G$ of $k$ free vertices. 
If $F=\emptyset$, the algorithm reaches its leaf and we have that $L(0)=1$.
Because we never obtain a subproblem with negative measure, we can assume that $L(k)=0$ if $k<0$ and 
$$L(k)\leq\max\{1,\max\{t\cdot L(k-t)\mid t\geq 1\},\max\{t\cdot L(k-t-2)+L(k-t)\mid t\geq 2\}\}$$
for $k\geq 1$.
Because $\alpha_3\geq \alpha_t$ for $t\geq 1$ and $\alpha_3\geq \beta_t$ for $t\geq 2$,
$L(k)\leq\alpha^k$ by Lemma~\ref{lem:recurrences}.
Since $k\leq n$, we obtain that the number of leaves of the search tree is at most $3^{n/3}$ by Lemma~\ref{lem:upper-bound}.

This gives an upper bound on the number minimal connected vertex covers of $G$ that contain $u$. It remains to observe that if $u_1u_2\in E(G)$, then every vertex cover of $G$ contains $u_1$ or $u_2$. Hence, the number of minimal connected vertex covers of $G$ is upper bounded by the sum of  the numbers of minimal connected vertex covers that contain $u_1$ and $u_2$ respectively. 

Because the  breadth-first search of $G$ starting at $u$ and constructing $\mathcal{G}'(u)$ can be done in polynomial time, and 
 each step of $\textsc{EnumCVC-d-h}$ also can be done in polynomial time, the bound on the number of leaves of the search tree  immediately implies that the algorithm runs in time $O^*(3^{n/3})$.
\qed
\end{proof}

\medskip

Proposition~\ref{prop:lower} shows that the upper bound is tight up to a constant factor.

\section{Conclusions}

The bounds that we have given
for chordal graphs and distance-hereditary graphs are tight. While we can hope to improve the other upper bounds of this paper, we conjecture that they exceed $3^{n/3}$. It can be observed that for some more narrow classes of graphs of bounded chordality, the number of minimal connected vertex covers becomes polynomial.

\begin{proposition}\label{prop:split} 
The number of minimal connected vertex covers of a split graph  $G$ is at most $n$, and these can be enumerated in time $O(n+m)$. The number of minimal connected vertex covers of a cobipartite graph $G$ is at most $n^2/4+n$, and these can be enumerated in time $O(n^2)$.
\end{proposition}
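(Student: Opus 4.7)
The plan is to exploit the fact that in both classes the complement $V(G)\setminus U$ of a vertex cover $U$ is an independent set of very restricted size.

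For the split graph part, I would fix a split partition $V(G) = K \cup I$ with $K$ a clique and $I$ an independent set. Since $K$ is a clique, $|K \setminus U| \leq 1$ for every vertex cover $U$, yielding two cases. If $K \subseteq U$, then for any $w \in I \cap U$ all neighbors of $w$ lie in $K \subseteq U$, so $w$ has no neighbor outside $U$; moreover $w$ is not a cut vertex of $G[U]$, since $G[K]$ is a clique dominating $G[U] \setminus \{w\}$ through $K$-neighbors. By the minimality characterization from Section~\ref{sec:defs}, $w$ cannot lie in a minimal connected vertex cover, so $U = K$, yielding one candidate. If $K \setminus U = \{v\}$, then covering the edges from $v$ into $I$ forces $N(v) \cap I \subseteq U$, and the same redundancy argument applied to the clique $K \setminus \{v\}$ forces $I \cap U \subseteq N(v)$; so the cover is uniquely determined by $v$, namely $U = (K \setminus \{v\}) \cup (N(v) \cap I)$, giving at most $|K|$ candidates. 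Summing yields at most $|K|+1$ candidates overall. A brief check trims this to $n$: if $|K| = n$ then $I = \emptyset$ and the Case-A candidate $U = K$ has no vertex with an external neighbor (so is not minimal when $|K| \geq 2$), while if $|K| \leq n-1$ then $|K|+1 \leq n$. The enumeration just builds each candidate in $O(n+m)$ total time from a precomputed split partition, and in Case B any candidate that is a connected vertex cover is automatically minimal, since every vertex of $U$ is adjacent to $v \notin U$.

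For the cobipartite part, I would fix a partition $V(G) = K_1 \cup K_2$ into cliques. Now the independent set $V(G) \setminus U$ has at most one vertex in each of $K_1, K_2$, so $|V(G) \setminus U| \leq 2$. I would enumerate by this size: size $0$ gives the single candidate $U = V(G)$; size $1$ gives at most $n$ candidates $V(G) \setminus \{v\}$; and size $2$ gives candidates $V(G) \setminus \{v_1, v_2\}$ with $v_1 \in K_1$, $v_2 \in K_2$, and $v_1 v_2 \notin E(G)$, numbering at most $|K_1| \cdot |K_2| \leq n^2/4$ by AM--GM. The size-$0$ candidate cannot be a minimal connected vertex cover in any non-trivial cobipartite graph, since $U = V(G)$ forces every vertex to be a cut vertex of $G$, which cannot happen inside a clique of size at least $2$. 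Summing yields at most $n^2/4 + n$ candidates. Enumeration lists the $O(n^2)$ pairs and performs $O(1)$-time connectivity and minimality checks per pair using precomputed cross-edge counts and degrees (an edge survives inside $G[U]$ in the size-$2$ case iff the total cross-edge count minus the cross-degrees of $v_1$ and $v_2$ is positive), giving total time $O(n^2)$.

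The main obstacle is the case bookkeeping for minimality in the corner cases (small $|K|$, $|K_1|$, or $|K_2|$; the degenerate size-$0$ and size-$1$ covers in the cobipartite case; the interplay between the $+1$ from Case A and the degenerate case $I = \emptyset$ in the split case) rather than any deeper combinatorial argument. The core observation---that $V(G) \setminus U$ is a very small independent set---already does all the counting work, and the Observation~\ref{obs:cuts}/minimality characterization reduces the per-candidate verification to a constant-depth check.
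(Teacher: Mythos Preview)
Your proposal is correct and follows essentially the same approach as the paper: both proofs use the observation that the complement of a vertex cover meets each clique in at most one vertex, then enumerate the resulting $O(n)$ (split) or $O(n^2)$ (cobipartite) candidates determined by which clique vertices are omitted. Your write-up is in fact more explicit than the paper's in justifying minimality (the redundancy argument for $w\in I\setminus N(v)$ and the observation that in Case~B every vertex of $U$ is adjacent to $v\notin U$), and your cobipartite case split by $|V(G)\setminus U|\in\{0,1,2\}$ is just a reorganization of the paper's split by which of $K_1,K_2$ is contained in $U$.
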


\begin{proof}
Notice that if $X$ is a clique of a graph $G$, then for every vertex cover $U$ of $G$, either $X\subseteq U$ or $|X\setminus U|=1$.
Let $G$ be a split graph. Without loss of generality we can assume that $G$ is a connected graph with at least two vertices. Let $K,I$ be a partition of $V(G)$ in a clique $K$ and an independent set $I$ and assume that $K$ is an inclusion maximal clique of $G$. 
 If $V(G)=K$, then $G$ has $n$ minimal connected vertex covers $K\setminus \{v\}$ for $v\in V(G)$. Assume  that $I\neq\emptyset$. Then $K$ is a connected vertex cover of $G$. For $v\in K$, if $U$ is a minimal connected vertex cover of $G$ with $v\notin U$, $U=(K\setminus\{v\})\cup N_G(v)$. It immediately implies that $G$ has at most $n$ connected vertex covers. Taking into account that a partition $K,I$ can be found in time $O(n+m)$, it follows that the minimal connected vertex covers can be enumerated in time $O(n+m)$.

Let now $G$ be a  cobipartite graph. Again, we can assume without loss of generality that $G$ is a connected graph with at least two vertices. If $G$ is a complete graph, then $G$ has $n$ connected vertex covers. Assume that $G$ is not a complete graph, and let $K_1,K_2$ be a partition of $V(G)$ into two cliques. Let $U$ be a minimal connected vertex cover of $G$.
If $K_1\subseteq U$, then $U=V(G)\setminus \{v\}$ for $v\in K_2$, and there are at most $|K_2|$ sets of this type. Symmetrically, there are at most $|K_1|$ minimal connected vertex covers $U$ with $K_2\subseteq U$. If $K_1\setminus U\neq \emptyset$ and $K_2\setminus U\neq\emptyset$, then $U=V(G)\setminus\{u,v\}$ for $u\in K_1$ and $v\in K_2$, and $G$ has at most $|K_1||K_2|$ such  minimal connected vertex covers. We conclude that $G$ has at most $|K_1|+|K_2|+|K_1||K_2|\leq n^2/4+n$ minimal connected vertex covers. Clearly, these arguments can be applied to obtain an enumeration algorithm that runs in time $O(n^2)$.\qed
\end{proof}

Finally let us consider a related combinatorial question.  What is the 
relation of the maximum number of minimal vertex covers and the maximum number of minimal connected vertex covers?  More precisely, is there a particular class of connected graphs  for which the difference of both numbers is exponential in $n$? 
On one hand, trees provide an easy answer. Clearly every tree has a unique minimal connected vertex cover consisting of all its cut vertices.  On the other hand, there are trees with at least 
$2^{\frac{n-1}{2}}$ minimal vertex covers; e.g. a union of $K_2$'s with an additional vertex adjacent to exactly one vertex of each $K_2$.  Hence there are trees on $n$ vertices having one minimal connected vertex cover but their number of minimal vertex covers is exponential in $n$. 

It is more interesting to observe that the number of minimal connected vertex covers of 
a connected graph may be  significantly larger than its number of minimal vertex covers. Consider the graphs $G_k$, $k\ge 1$ integer, constructed as follows.
\begin{itemize}
\item The vertex set of $G_k$ consists of $A_i=\{a_i,b_i,c_i,d_i,e_i\}$ for all $i=1,2,\ldots k$ and 
$x_i$ for all  $i=1,2,\ldots ,k+1$. Thus $G_k$ has $n=6k+1$ and $G_1$ has $7$ vertices.   
\item For all $i=1,2,\ldots ,k$, vertex $x_i$ is adjacent in $G_k$ to all vertices of $A_i\cup A_{i+1}$ 
and $x_{k+1}$ is adjacent to all vertices of $A_k$.
\end{itemize}
On one hand, observe that for every $A_i$, every minimal vertex cover of $G_k$ contains either all vertices of $A_i$ or no vertex of $A_i$. Hence the minimal vertex covers of $G_k$
are in one-to-one correspondence to the minimal vertex covers of the path $P_{2k+1}$
obtained by identifying in $G_k$ all vertices of $A_i$ for all $i=1,2,\ldots ,k+1$. Since a path is triangle-free, both  $P_{2k+1}$ and $G_k$  have at most $2^{\frac{2k+1}{2}} = 2^{\frac{n+2}{6}}$ minimal vertex covers~\cite{HujteraT93}.
On the other hand, all cut vertices of $G_k$, i.e. $x_2,x_3,\ldots, x_k$ belong to every
minimal connected vertex cover $X$ of $G_k$. Thus for every minimal connected vertex cover 
$X$ we have $|X\cap A_i|=1$ for all $i=1,2,\ldots k$, which implies that $G_k$ has at least 
$5^k= 5^{\frac{n-1}{6}}$. Since $2^{1/6} < 5^{1/6}$ and $n\ge 7$, the difference 
between the number of minimal connected vertex covers and the number of minimal  vertex covers is exponential in $n$. 

It is an open question whether there is a family of connected graphs such that their number 
of minimal vertex covers is polynomial in $n$, while their number of minimal connected
vertex covers is exponential in $n$.

\end{document}